\documentclass[11pt,letterpaper]{amsart}


\usepackage[foot]{amsaddr}
\usepackage{amsmath}
\usepackage{amsthm}
\usepackage{amssymb}
\usepackage{url}
\usepackage{hyperref}
\usepackage{todonotes}
\usepackage{mathtools}
\usepackage{mathrsfs}
\usepackage{booktabs}
\usepackage[top=1in,bottom=1in,left=1in,right=1in]{geometry}
\pdfoutput=1


\newtheorem{Thm}{Theorem}
\newtheorem{Lem}[Thm]{Lemma}

\theoremstyle{remark}

\theoremstyle{definition}

\newenvironment{Proof}{\begin{proof}}{\end{proof}}


\newcommand{\E}{\ensuremath{\mathbb{E}}}
\newcommand{\F}{\ensuremath{\mathbb{F}}}
\newcommand{\Z}{\ensuremath{\mathbb{Z}}}
\newcommand{\bra}{\ensuremath{\langle}}
\newcommand{\ket}{\ensuremath{\rangle}}
\newcommand{\sgn}{\ensuremath{\operatorname{sgn}}}
\newcommand{\per}{\ensuremath{\operatorname{per}}}
\newcommand{\pcc}{\ensuremath{\operatorname{pcc}}}


\begin{document}


\setcounter{page}{0}

\title[]{The Shortest Even Cycle Problem is Tractable}

\author{Andreas Bj\"orklund}
\address{Lund, Sweden}
\email{andreas.bjorklund@yahoo.se}
\author{Thore Husfeldt}
\address{Lund University and Basic Algorithms Research Copenhagen, ITU Copenhagen}
\email{thore@itu.dk}
\author{Petteri Kaski}
\address{Department of Computer Science, Aalto University, Espoo, Finland}
\email{petteri.kaski@aalto.fi}

\begin{abstract}
Given a directed graph as input, we show how to efficiently find a shortest (directed, simple) cycle on an even number of vertices. As far as we know, no polynomial-time algorithm was previously known for this problem. 
In fact, finding \emph{any} even cycle in a directed graph in polynomial time was open for more than two decades until Robertson, Seymour, and Thomas (\emph{Ann.~of~Math.~(2)}\/~1999) and, independently, McCuaig (\emph{Electron.~J.~Combin.}~2004; announced jointly at STOC~1997) gave an efficiently testable structural characterisation of even-cycle-free directed graphs.

Methodologically, our algorithm relies on the standard framework of algebraic fingerprinting and randomized polynomial identity testing over a finite field, and in fact relies on a generating polynomial implicit in a paper of Vazirani and Yannakakis ({\em Discrete~Appl.~Math.}~1989) that enumerates weighted cycle covers by the parity of their number of cycles as a difference of a permanent and a determinant polynomial. The need to work with the permanent---known to be \#P-hard apart from a very restricted choice of coefficient rings (Valiant, {\em Theoret.~Comput.~Sci.} 1979)---is where our main technical contribution occurs. We design a family of finite commutative rings of characteristic~4 that simultaneously (i) give a nondegenerate representation for the generating polynomial identity via the permanent and the determinant, (ii) support efficient permanent computations by extension of Valiant's techniques, and (iii) enable {\em emulation} of finite-field arithmetic in characteristic 2. Here our work is foreshadowed by that of Bj\"orklund and Husfeldt ({\em SIAM~J.~Comput.}~2019), who used a considerably less efficient commutative ring design---in particular, one lacking finite-field emulation---to obtain a polynomial-time algorithm for the shortest two disjoint paths problem in undirected graphs. 

Building on work of Gilbert and Tarjan ({\em Numer.~Math.}~1978) as well as Alon and Yuster ({\em J.~ACM}~2013), we also show how ideas from the nested dissection technique for solving linear equation systems---introduced by George ({\em SIAM J.~Numer.~Anal.}~1973) for symmetric positive definite real matrices---leads to faster algorithm designs in our present finite-ring randomized context when we have control on the separator structure of the input graph; for example, this happens when the input has bounded genus.
\end{abstract}

\maketitle
\thispagestyle{empty}


\clearpage

\section{Introduction}

Given a directed graph, we show how to efficiently find a shortest 
(directed, simple) cycle on an even number of vertices. That this problem has 
eluded tractability until now is perhaps, for lack of a better word, odd.

After all, elementary considerations show that the Shortest \emph{Odd} Cycle problem%
\footnote{%
  To fix terminology, a cycle is a closed walk without repeated vertices; our graphs are unweighted, and they are directed unless otherwise noted; cycles in directed graphs must follow the direction of their edges (sometimes called directed cycles or dicycles), and an odd walk is a walk with an odd number of vertices.}
is tractable. Indeed, every shortest closed odd walk in a directed graph is 
simple, because otherwise the walk would decompose into two shorter closed 
walks that cannot both be even. Such a shortest closed odd walk is a shortest 
odd cycle, and thus it can be found in polynomial time.%
\footnote{More concretely, a breadth-first search from every vertex in a graph with $n$ vertices and $m$ edges finds a shortest odd cycle in time $O(nm)$.}{}
This approach however fails in the even case, because a shortest 
closed \emph{even} walk need not be simple:
\[
\includegraphics[width=20mm]{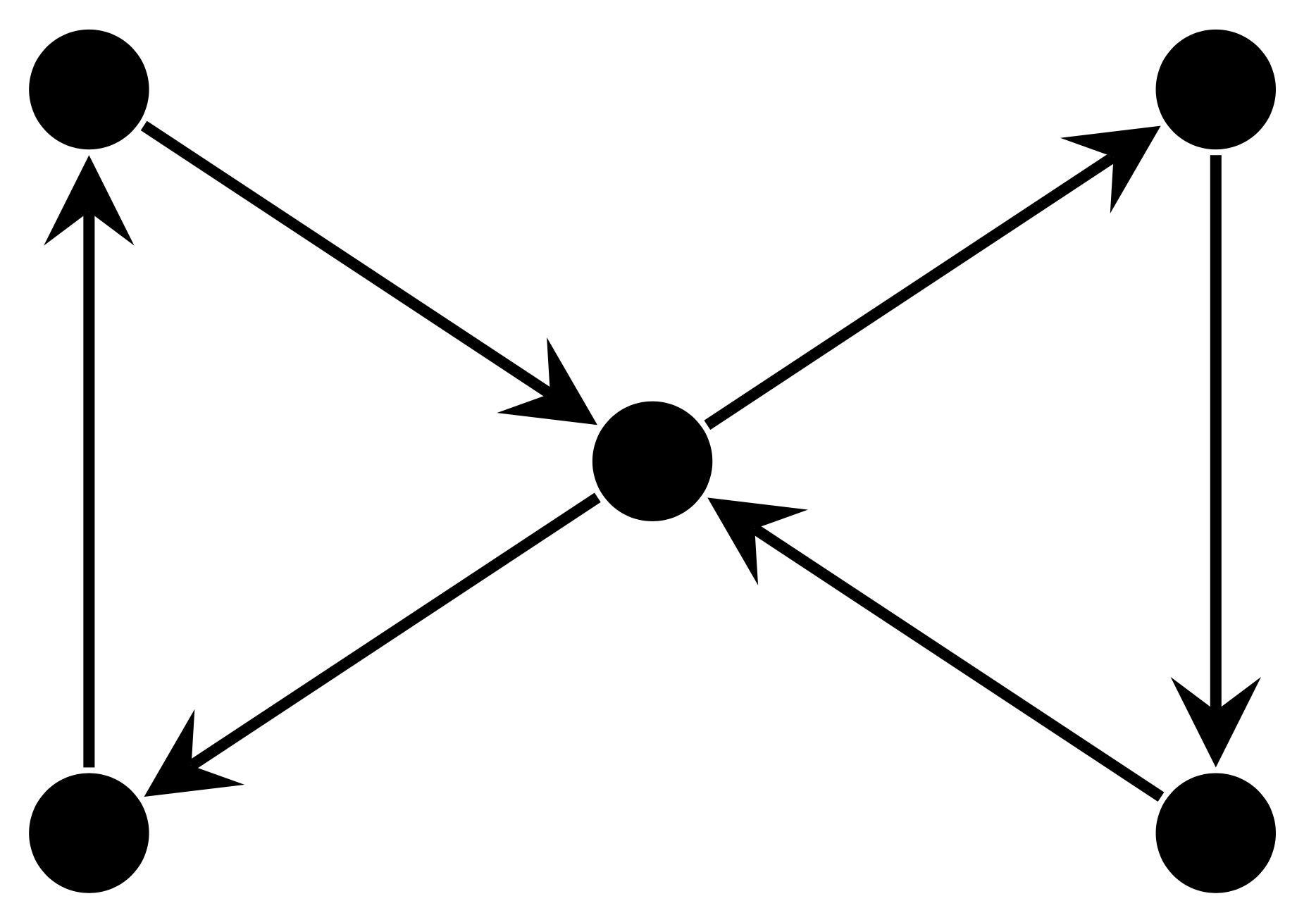}\!\!\ .
\]

Still, in \emph{undirected} graphs, the shortest even cycle problem is well understood, though the arguments are more sophisticated.
The earliest polynomial-time algorithms use Edmond's minimum-weight perfect matching algorithm. 
Later, Monien \cite{Monien83} published an algorithm with running time $O(n^2\alpha(n))$, which was improved to $O(n^2)$ by Yuster and Zwick~\cite{YusterZ97}.
Alas, these algorithms are based on combinatorial properties of undirected graphs that do not hold in directed graphs.
Thus, no algorithms for even cycles in \emph{directed} graphs follow from this work.

Nor was it clear that this problem should be tractable.

In fact, it was open for a long time whether there exists a polynomial-time algorithm for the \emph{recognition} version, the  Even Cycle problem:
``Given a directed graph, does it have an even cycle (no matter its length)?''
The question goes back to Younger~\cite{Younger73} in the early 1970s and 
was reiterated in many subsequent papers (\emph{e.g.},~\cite{
  ChungGK94, 
  Thomassen85, 
  Thomassen92,
  Thomassen93, 
  VaziraniY1989, 
  YusterZ97}).
Finally, at the turn of the millennium, McCuaig~\cite{McCuaig04} and independently Robertson, Seymour, and Thomas~\cite{RST99} gave a characterisation of undirected bipartite graphs meeting the requirements in P\'olya's permanent problem, which indirectly using already known reductions by Little~\cite{Little1975} and Seymour and Thomassen~\cite{SeymourT1987} lead to a polynomial time algorithm for the Even Cycle problem.
However, it is not at all clear how to use such a recognition-oracle to find a \emph{shortest} even cycle in a directed graph.

\medskip
Thus, neither the elementary algorithm for the odd case, nor the more sophisticated algorithms for the undirected case, nor the very extensive machinery behind the Even Cycle problem have led to an efficient algorithm for the \emph{Shortest} Even Cycle problem. As our main result, we present such an algorithm using a very different approach. 
\begin{Thm}[Computing the length of a shortest even cycle]
\label{thm:main}
  Given a directed graph with $n$ vertices, the length of a shortest even cycle can be found in time $\tilde O(n^{3+\omega})$ with probability at least $1-O(n^{-1})$.
  Here, $\omega$ is the square matrix multiplication exponent.
\end{Thm}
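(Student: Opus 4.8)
The plan is to reduce the problem to computing a single polynomial and reading off its lowest nonvanishing degree. To the input digraph on vertex set $[n]$ I associate the matrix $A(x)$ with $A(x)_{ij}=a_{ij}x$ when $ij$ is an arc and $0$ otherwise, where the arc labels $a_{ij}$ will be drawn at random from a finite commutative ring $R$ chosen below. The generating identity implicit in Vazirani and Yannakakis is
\[
 F(x)\;:=\;\per\bigl(I+A(x)\bigr)-\det\bigl(I+A(x)\bigr)\;=\;2\sum_{\mathcal C}x^{\|\mathcal C\|}\prod_{ij\in\mathcal C}a_{ij},
\]
the sum running over families $\mathcal C$ of pairwise vertex-disjoint directed cycles of the input whose number of even-length cycles is odd, with $\|\mathcal C\|$ the number of covered vertices; the coefficient $2=1-(-1)$ is the value of $1-\sgn$ on precisely such families, obtained by expanding both $\per$ and $\det$ of $I+A(x)$ over principal minors and using that $\sgn$ of a permutation is $(-1)$ to the number of even cycles. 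The combinatorial crux is a monotonicity observation: every contributing $\mathcal C$ contains at least one even cycle, and the cheapest possibility is a single shortest even cycle, so the least exponent appearing in $F$ is exactly the shortest even cycle length $L^\ast$ (and $F\equiv 0$ iff the graph has no even cycle). In particular the coefficients of $x^0,\dots,x^{L^\ast-1}$ vanish \emph{identically} over $R$, while the coefficient of $x^{L^\ast}$ equals $2\sum_C\prod_{ij\in C}a_{ij}$ over shortest even cycles $C$, which is a sum of distinct squarefree monomials.

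The next step is to pick $R$ so that three things hold simultaneously. We need (i) $2\neq 0$ in $R$, so $F$ is not forced to zero; (ii) $\per(I+A(\xi))$ computable in polynomial time for $\xi\in R$; and (iii) a way to certify that the coefficient of $x^{L^\ast}$ is nonzero. I would take $R=\mathrm{GR}(4,t)=(\Z/4\Z)[y]/\bigl(h(y)\bigr)$, the Galois ring of characteristic $4$ and rank $t$, with $h$ of degree $t$ and reduction mod $2$ irreducible over $\F_2$: then $R$ is local with maximal ideal $2R$ and residue field $R/2R\cong\F_{2^t}$, and reduction modulo $2$ is a ring homomorphism $R\to\F_{2^t}$. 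Choose each $a_{ij}$ as an arbitrary lift of a uniformly random $\bar a_{ij}\in\F_{2^t}$. For any $c\in R$ one has $2c=0$ iff $c\in 2R$ iff $\bar c=0$ in $\F_{2^t}$; applied to the coefficient $2\,P(a)$ of $x^{L^\ast}$ with $P=\sum_C\prod a_{ij}$, this coefficient is nonzero iff $\overline P(\bar a)\neq 0$, and $\overline P$ is a nonzero polynomial of degree $\le n$ over $\F_{2^t}$. By Schwartz–Zippel this fails with probability at most $n/2^{t}$, which is $O(n^{-1})$ once $t=O(\log n)$. The algorithm is then: compute $F(x)\in R[x]$, output the least degree with nonzero coefficient, or report no even cycle if $F\equiv 0$.

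The genuine obstacle, and the heart of the matter, is (ii): the permanent is $\#\mathrm{P}$-hard over essentially every ring other than those of characteristic a power of $2$, and the choice of characteristic $4$ is precisely what threads the needle between (i) and (ii). Here I would extend, to the ring $R$ and to an efficient running time, Valiant's polynomial-time computation of the permanent modulo $2^k$: starting from $\per\equiv\det\pmod 2$, write a matrix over $R$ as $M^{(0)}+2M^{(1)}$ and expand each entry, so that $\per(M)$ modulo $2R$ becomes a main term plus $O(n)$ terms carrying a factor $2$; the latter are needed only modulo $2R$, hence are determinants over $\F_{2^t}$, and iterating Valiant's device one further $2$-adic digit (which terminates because $R$ has characteristic $4$) likewise reduces the main term, so that one permanent over $R$ costs $\tilde O(n)$ determinant computations over $\F_{2^t}$ or $R$. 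I would evaluate these determinants by a division-free method (Berkowitz/Samuelson) to avoid pivoting in the local ring $R$, at $\tilde O(n^{\omega+1})$ each, and assemble $F(x)$ of degree $\le n$ by evaluation at $O(n)$ points of $R$ (pairwise differences are units since $|\F_{2^t}|>n$) and interpolation; the $\det$ term is subsumed. Since $t=O(\log n)$ makes each $R$-operation cost $\tilde O(1)$, the total is $O(n)\text{ points}\times\tilde O(n)\text{ determinants}\times\tilde O(n^{\omega+1})=\tilde O(n^{3+\omega})$, and together with the single Schwartz–Zippel estimate above the success probability is $1-O(n^{-1})$. The steps I expect to be delicate are exactly this efficient permanent-over-$R$ computation and the design of the ring $R$ itself; the evaluation–interpolation bookkeeping, the determinant subroutine, and verifying the no-even-cycle case I would treat as routine.
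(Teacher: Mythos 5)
Your overall strategy is the same as the paper's: the Vazirani--Yannakakis generating polynomial $F(x)=\per(I+A(x))-\det(I+A(x))=2\sum_{\mathcal C}x^{\|\mathcal C\|}\prod a_{ij}$ (the paper parameterizes by the number of loops rather than covered vertices, but this is cosmetic); the Galois ring $\mathrm{GR}(4,t)=(\Z/4\Z)[y]/(h)$ is precisely the paper's $\E_{4^d}$; the ``$2c=0$ iff $\bar c=0$'' observation and Schwartz--Zippel over the residue field $\F_{2^t}$ are the paper's emulation and polynomial identity testing; and evaluation at $O(n)$ points and interpolation to find the lowest nonvanishing coefficient is the paper's Algorithm S. The combinatorial analysis (only families with an odd number of even cycles contribute, the minimum degree is $L^\ast$, and the coefficient of $x^{L^\ast}$ is a squarefree nonzero polynomial) is correct and matches Lemma~\ref{lem:shortest-even-cycle}.

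The gap is in the step you yourself flag as delicate: the efficient permanent over $R$. Your multilinear expansion $\per(M^{(0)}+2M^{(1)})=\per M^{(0)}+2\sum_i\per\bigl(M^{(0)}_{[i\to M^{(1)}_i]}\bigr)$ correctly yields $O(n)$ terms that only matter modulo $2R$ and hence are determinants over $\F_{2^t}$, but it leaves $\per M^{(0)}$ over $R$ entirely unresolved, and ``iterating Valiant's device one further $2$-adic digit'' does not describe an algorithm: once the coefficients of $M^{(0)}$ are chosen in $\{0,1\}$ there is no further digit to peel, and $\per M^{(0)}$ over $R$ is as hard as the original problem. The actual workhorse, both in Valiant's algorithm and in the paper, is branching Gaussian elimination: eliminating odd entries column by column spawns one side branch per row operation, hence $\Theta(n^2)$ side branches in the dense case, not the $\tilde O(n)$ you charge. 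Moreover each side branch $M''$ has a pair of similar rows, and its permanent equals $2\sum_{f\in S_n'}$ over only \emph{half} the permutations; over $\F_{2^t}$ (characteristic $2$) the full $\per\underline{M''}=\det\underline{M''}$ vanishes, so this is \emph{not} an ordinary determinant over $\F_{2^t}$. The paper resolves this by introducing an auxiliary indeterminate $r$ and reducing to a determinant over $\F_{2^t}[r]$ of low average column degree, computed in $\tilde O(n^\omega)$ field operations via Labahn--Neiger--Zhou (Lemma~\ref{lem:simpair-per-det} and Theorem~\ref{thm:lnz}); this, not Berkowitz, is what makes the arithmetic come out. With the corrected count of $O(n^2)$ branches and Berkowitz at $\tilde O(n^{\omega+1})$ per branch, your algorithm would run in $\tilde O(n^{\omega+4})$, not $\tilde O(n^{\omega+3})$; to match the paper you would need both to acknowledge the $O(n^2)$ branch count and to supply the polynomial-determinant reduction together with a $\tilde O(n^\omega)$ subroutine for it.
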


A shortest even cycle can thus be \emph{found} using standard self-reduction to 
the above result in time $\tilde O(n^{4+\omega})$. Our approach seems to be useful also for the Even Cycle problem, in particular we get a faster algorithm than was previously known for bounded genus graphs (cf.~\S\ref{sect:nested}). This latter algorithm can also be modified to solve the Shortest Even Cycle problem in this graph class faster than the general algorithm in Theorem~\ref{thm:main}.

\subsection{Overview of techniques}

Our paper is largely self-contained, and the correctness and running time arguments are quite short, certainly in comparison to \cite{McCuaig04, RST99}.
At a high level, our approach is to rely on algebraic fingerprinting~\cite{KoutisW2016} and a combination of the permanent and the determinant functions---implicit in a paper of Vazirani and Yannakakis~\cite{VaziraniY1989}---to obtain an edge-weighted enumeration of the cycle covers of the input graph by the parity of the number of cycles in the cover, which gives us fingerprinting-control on cycle covers containing a shortest even cycle. What makes this approach towards tractability nontrivial is the need to work with the permanent function, which is known to be \#P-hard except for very restricted families of rings by the work of Valiant~\cite{Valiant1979}; the most notable such family admitting efficient permanent computation are the integers modulo even prime powers. As such, our key technical contribution here amounts to engaging in ``designer commutative algebra'' to design a family of finite rings that 
simultaneously 
\begin{itemize}
\item[(i)] 
for a weighted adjacency matrix $A$,
give a nondegenerate representation for the parity cycle cover 
identity (cf.~\eqref{eq:pcc-to-per-det} in \S\ref{sect:pcc})
\[
2\pcc_{n-1} A=\per A-\det A\,;
\]
\item[(ii)] 
admit efficient permanent computation by extension of Valiant's 
techniques~\cite{Valiant1979}; and 
\item[(iii)] 
are sufficiently ``field-like'' to enable and benefit from standard design techniques for finite fields, in particular randomized polynomial identity 
testing~\cite{DeMilloL1978,Schwartz1980,Zippel1979}, polynomial interpolation,
and fast algorithms for determinants~\cite{BunchH1974,LabahnNZ2017} 
over finite fields.
\end{itemize}
Somewhat more precisely, for a finite field $\F_{2^d}$ of characteristic $2$, 
our design ``extends the characteristic'' to obtain a finite ring $\E_{4^d}$ 
of characteristic $4$ that satisfies the constraints above; crucially, 
whenever a (multivariate) polynomial identity $2p=q$ holds for two 
polynomials $p$ and $q$ over $\E_{4^d}$, with $p$ restricted 
to $\{0,1\}$-coefficients, we can {\em emulate} the evaluation 
of $p$ {\em over the finite field $\F_{2^d}$} by evaluating $q$ 
{\em over $\E_{4^d}$ instead}, followed by a simple inversion operation 
to recover the value of $p$ over $\F_{2^d}$. In our main application, $p$ is 
the parity cycle cover enumeration (which we evaluate over $\F_{2^d}$ using 
emulation), and $q$ is the difference of a permanent and a determinant, 
both of which admit fast algorithms over $\E_{4^d}$ using a 
{\em reverse-emulation} approach to run much of the computations 
in $\F_{2^d}$ using dedicated finite-field algorithms. We expect these design 
techniques to be potentially useful in other contexts as well. 

\medskip
\noindent
{\em Remark.} The present strategy of ``designing the ring'' has been 
foreshadowed in earlier work, in particular 
Bj\"orklund and Husfeldt~\cite{BjorklundH19} rely on permanent computations 
over a degree-truncated polynomial ring $\Z_{4}[x]/\bra x^d\ket$ 
of characteristic $4$ to obtain a randomized polynomial-time algorithm for 
the Shortest Two Disjoint Paths problem in undirected graphs.
This earlier 
design, however, does not support field-emulation and needs $d$ to be of size polynomial in $n$ as opposed to logarithmic in $n$ as we use here. This accordingly leads to 
considerably less efficient algorithms. For example, applying the present 
techniques, we can improve Shortest Two Disjoint Paths on an $n$-vertex, 
$m$-edge input from $O(n^{10}m^3)$ time in \cite{BjorklundH19} to 
$\tilde O(n^{3+\omega})$, cf.~\S\ref{sect:s2dp}.

The other direction---using the algebraic tools from \cite{BjorklundH19} to compute the expressions in the present paper---would also work, though we have not spelt out the details.
The resulting running time would be similar to that of \cite{BjorklundH19}.

\subsection{Related work}
\label{sect:related-work}

Let us now proceed to a more detailed discussion of related work.

\medskip
\noindent
{\em Girth and odd cycles.}
Algorithms for finding a shortest cycle in a graph (known as computing its girth) are textbook material and go back to Itai and Rodeh~\cite{ItaiR78}.
They are based on iterated breadth-first search in time $O(nm)$.
As mentioned above, these algorithms also find a shortest odd cycle because one of the bfs-trees contains a shortest closed odd walk, which must be simple.

\begin{table}
  \sf
\begin{tabular}{lll}\toprule
  Question & Time & Remarks \\\midrule
  Length of shortest cycle, girth & $O(nm)$ & BFS from every vertex; Itai and Rodeh~\cite{ItaiR78}\\
  \addlinespace[0.5ex]
  Does the graph contain an odd cycle? & $O(n+m)$ & DFS \\
  \addlinespace[0.5ex]
  Length of shortest odd cycle & $O(nm)$  & BFS from every vertex \\
  \addlinespace[0.5ex]
  Does the graph contain an even cycle? \\
  \hspace*{4cm}undirected & $O(n+m)$ & DFS; Arkin, Papadimitriou, and Yannakakis \cite{ArkinPY91} \\
  \hspace*{4cm}directed & $O(n^3)$ & Robertson, Seymour, and Thomas \cite{RST99}\\
  \addlinespace[0.5ex]
  Length of shortest even cycle \\
  \hspace*{4cm}undirected & $O(n^2)$ & Yuster and Zwick \cite{YusterZ97} \\
  \hspace*{4cm}directed & $\tilde O(n^{3+\omega})$ & This paper \\\bottomrule
\end{tabular}
  \vspace*{1mm}
  \caption{Overview of algorithms to find cycles and short cycles.}
\end{table}

The recognition problem ``Given a graph (directed or undirected), does it \emph{contain} an odd cycle'' is easier:
An undirected graph contains an odd cycle if and only if it is not bipartite.
Maybe less obviously, a directed graph contains an odd cycle if and only if one of its strongly connected components is non-bipartite .
Thus, odd cycle in a graph can be detected in time $O(n+m)$ using depth-first search.
All of these algorithm can be modified to output the cycle in question in the same time bound.

\medskip
\noindent
{\em Even cycles in undirected graphs.}
An undirected graph does not contain an even length cycle if and only if every biconnected component is an edge or an odd-length cycle, so the recognition problem is again solved in time $O(n+m)$ using depth-first search \cite{ArkinPY91,Thomassen88}.

It was also realised quite early how to find a \emph{shortest} even cycle in undirected graphs in polynomial time.
Early constructions are based on minimum perfect matchings; Thomassen~\cite{Thomassen85} attributes this argument to Edmonds, Monien~\cite{Monien83} to Gr\"otschel and Pulleyblank, who themselves credit ``Waterloo-folklore''~\cite{GrotschelP81}.
Monien then gave a sophisticated and much faster algorithm with running time $O(n^2\alpha(n))$ for finding a shortest even path.
This result was later improved by Yuster and Zwick to time $O(n^2)$~\cite{YusterZ97}. These algorithms are based on a variant of breadth-first search and use the fact that in an undirected graph, every shortest even cycle consists of two paths that are ``almost shortest paths'', \emph{i.e.}, they are at most one edge longer than a shortest path.

\medskip
\noindent
{\em Recognising even cycles in directed graphs.}
The history of the Even Cycle problem is rich, see McCuaig~\cite{McCuaig00} for a survey.
The problem has many equivalent characterisations, twenty-three of which are enumerated in McCuaig's systematic account~\cite{McCuaig04}.
For instance, is a given hypergraph with $n$ vertices and $n$ hyperedges minimally nonbipartite?
Does a given bipartite graph admit a Pfaffian orientation?
Is a given square matrix sign-nonsingular, i.e., is every matrix with the same sign pattern (plusses, minuses and $0$s in the same positions) nonsingular?
Perhaps the most famous version is P\'olya's permanent problem:
Given a $0$-$1$ matrix $A$, can you flip some of the $1$s to $-1$s creating another matrix $B$ so that $\per A  = \det B$?
Most important for the present paper is the characterisation of Vazirani and Yannakakis~\cite{VaziraniY1989}:
Given a square matrix $A$ of nonnegative integers, determine if $\det A = \per A$.

Polynomial-time algorithms for the Even Cycle problem were found independently by McCuaig~\cite{McCuaig04} and Robertson, Seymour, and Thomas~\cite{RST99}, announced jointly in~\cite{McCuaigRST97}.
The algorithm of~\cite{RST99} runs in time $O(n^3)$.
McCuaig eschews analysing the running time of the construction in~\cite{McCuaig04} and merely observes that it is polynomial.
An earlier paper of Thomassen~\cite{Thomassen93} showed that the Even Cycle problem in planar graphs could be solved in time $O(n^6)$. 

\medskip
\noindent
{\em Hardness results in directed graphs.}
Even though the Even Cycle problem in directed graphs admits a polynomial-time recognition algorithm, it seems difficult to extract any kind of information about length-constrained cycles in directed graphs.
For instance, it is NP-hard to determine if a directed graph contains 
(i) an odd cycle through a given edge~\cite{Thomassen85}, (ii) an even cycle through a given edge~\cite{Thomassen85},
  (iii) an odd chordless cycle~\cite{Lubiw88}, and
  (iv) an even chordless cycle~\cite{Lubiw88}.

It is also difficult to find \emph{balanced} cycles in the following sense.
In a directed graph in which each arc is colored in one of two colors, it is NP-hard to find (a necessarily even) cycle that alternates between the two colors \cite{GutinSY1998}. It is also NP-hard to find an even cycle that uses equally many arcs of each color. This can be observed by reducing from the NP-hard Hamiltonian path problem. For an $n$-vertex directed graph $G$, and two vertices $s$ and $t$, we want to detect if there is a Hamiltonian path from $s$ to $t$ in $G$. We construct a larger arc-colored graph $G'$ by copying $G$ and let each of its arcs get the first color. We next add $n-2$ vertices to $G'$ and connect them on a long directed path from $t$ to $s$ and give each of these arcs the second color. Then $G'$ has a color-balanced cycle iff $G$ has a $s\rightarrow t$ Hamiltonian path.

Finally, it is also hard to find a cycle whose length meets other remainder criteria.
For any modulus $m > 2$ and nonzero remainder $r$ with $0<r<m$ it is NP-hard to determine if a directed graph contains a cycle of length $r$ modulo $m$~\cite{ArkinPY91}.
The complexity of the case $m>2$, $r=0$ seems to be open~\cite{HemaspaandraST04}.

\section{Preliminaries}

This section outlines the key preliminaries for our algebraic fingerprinting
approach and develops the key connection to matrix permanent and matrix 
determinant. For general background on algebraic fingerprinting, 
cf.~e.g.~Koutis and Williams~\cite{KoutisW2016}.

\subsection{Commutative algebra}
\label{sect:poly}

We assume familiarity with elementary concepts in commutative algebra as well as
with the standard algorithmic toolbox for working with polynomials in
one indeterminate 
(cf.~e.g.~von zur Gathen and Gerhard~\cite{vonzurGathenG2013}). 

All rings in this paper are nontrivial ($0\neq 1$) and commutative without
further mention. For a ring $R$ and indeterminates $x_1,x_2,\ldots,x_n$,
we write $R[x_1,x_2,\ldots,x_n]$ for the ring of polynomials in the 
indeterminates $x_1,x_2,\ldots,x_n$ and with coefficients in $R$.
For a polynomial $p\in R[x_1,x_2,\ldots,x_n]$ and values 
$\xi_1,\xi_2,\ldots,\xi_n\in R$, we write 
$p(\xi)=p(\xi_1,\xi_2,\ldots,\xi_n)\in R$ 
for the evaluation of $p$ at $x_i=\xi_i$ for all $i=1,2,\ldots,m$. 
We use symbols from the Roman alphabet to denote polynomials and 
symbols from the Greek alphabet to denote elements of a ring of coefficients.
For an integer $m\geq 2$, we write $\Z_m$ for the ring of integers modulo $m$.

For a ring $R$ and ideal $I\subseteq R$, we write $R/I$ for the quotient ring
$R$ modulo $I$. For a generator $g\in R$, we write $\bra g\ket$ for the 
ideal generated by $g$. In this paper, we work only with quotient rings 
of the form $S[x]/\bra g\ket$, where $S$ is a coefficient ring and
$g\in S[x]$ is a generator polynomial of degree $d\geq 1$. 
In particular, basic arithmetic (addition, subtraction, multiplication, 
and---when available---multiplicative inverses) in $S[x]/\bra g\ket$ can 
be implemented using $\tilde O(d)$ black-box oracle calls for arithmetic 
in $S$ and the standard algorithmic toolbox for polynomials in one 
indeterminate, cf.~von zur Gathen and Gerhard~\cite{vonzurGathenG2013}.
(In this paper we will work only with the constant-size coefficient
rings $S=\Z_2$ and $S=\Z_4$; the standard toolbox thus gives us 
tacit $\tilde O(d)$-time arithmetic in $S[x]/\bra g\ket$.)

For a positive integer $d$, we write $\F_{2^d}$ for the finite field of 
order $2^d$ and assume this field is represented for purposes of arithmetic
as $\F_{2^d}=\Z_2[x]/\bra g_2\ket$, where $g_2\in\Z_2[x]$ is 
a $\Z_2$-irreducible polynomial of degree $d$. Given $d$ as input, 
we recall that we can construct such a polynomial $g_2$ in expected
time $\tilde O(d^2)$ as a one-off preprocessing step for all subsequent
arithmetic~(cf.~\cite[\S14.9]{vonzurGathenG2013}).

\subsection{Cycle covers}

Let $G$ be an $n$-vertex simple directed graph with a loop at every vertex.
We write $V(G)$ for the vertex set of $G$ and $E(G)$ for the arc set of $G$. 
A {\em cycle cover} of $G$ is a subset $C\subseteq E(G)$ such that for every 
vertex $u\in V$ there is exactly one arc leading into $u$ and exactly one 
arc leading out of $u$ in $C$; these two arcs are identical exactly when 
the arc is a loop. Thus, viewing each loop in $C$ as a cycle, we have that 
$C$ consists of exactly $n$ arcs which partition into vertex-disjoint 
directed cycles. Equivalently, we may view $C$ as a permutation of $V(G)$ 
that takes each $u\in V(G)$ into the head of the arc leading out of $u$ in $C$.
We write $\kappa(C)$ for the number of cycles in $C$
and specifically $\lambda(C)$ for the number of loops in $C$.
Let us write $\mathscr{C}(G)$ for the set of all cycle covers of $G$.

\subsection{Enumerating cycle covers by parity}
\label{sect:pcc}

Let $R$ be a ring and associate an arc 
weight $w_{uv}\in R$ with every arc $uv\in E(G)$. Let $A\in R^{n\times n}$ 
be an $n\times n$ weighted adjacency matrix with rows and columns indexed 
by $V(G)$ such that the entry $A_{u,v}$ at row $u\in V(G)$ and 
column $v\in V(G)$ of $A$ is 
\begin{equation}
\label{eq:adj-mat}
A_{u,v}=\begin{cases}
w_{uv} & \text{if $uv\in E(G)$};\\
0      & \text{otherwise}.
\end{cases}
\end{equation}
For an integer~$m$, define the {\em parity cycle cover enumerator} 
with {\em parity}~$m$ by
\begin{equation}
\label{eq:pcc}
\pcc_m A\,=\sum_{\substack{C\in\mathscr{C}(G)\\\kappa(C)\equiv m\!\!\!\!\pmod 2}}\,\prod_{uv\in C}w_{uv}\,.
\end{equation}
It is well known that determinant and permanent of $A$ satisfy
\begin{align*}
\det A\,&=\sum_{C\in\mathscr{C}(G)}(-1)^{n-\kappa(C)}\prod_{uv\in C}w_{uv}\,,\\
\per A\,&=\sum_{C\in\mathscr{C}(G)}\,\prod_{uv\in C}w_{uv}\,,
\end{align*}
or, what is the same,
\begin{align*}
\det A&=\pcc_n A-\pcc_{n-1} A\,,\\
\per A&=\pcc_n A+\pcc_{n-1} A\,.
\end{align*}
In particular, we have
\begin{equation}
\label{eq:two-pcc}
2\pcc_{n-1} A=\per A-\det A\,.
\end{equation}
The formula \eqref{eq:two-pcc} will form the core of our algebraic 
fingerprinting approach. 

\subsection{Even cycles via cycle cover enumeration}

We continue to work over a ring $R$. 
We say that an enumerator is {\em identically zero} if it has the value
zero independently of the chosen arc weights. 

Vazirani and Yannakakis \cite{VaziraniY1989} essentially showed
(their Lemma 2.2 was for $\{0,1\}$-matrices) the following lemma;
we give a short proof for convenience of exposition.

\begin{Lem}[Existence of an even cycle]
\label{lem:even-cycle-existence}
The graph $G$ has an even cycle if and only if
$\pcc_{n-1} A$ is not identically zero.
\end{Lem}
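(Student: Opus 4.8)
The plan is to work with the polynomial $\pcc_{n-1}A$ over the ring $R=\Z[w_{uv}:uv\in E(G)]$ of integer polynomials in the arc weights, so that "$\pcc_{n-1}A$ is not identically zero" means precisely that some monomial $\prod_{uv\in C}w_{uv}$ survives with nonzero coefficient. The key combinatorial observation is that distinct cycle covers $C,C'\in\mathscr C(G)$ can give the same monomial only when they use the same multiset of arcs, and since a cycle cover is determined by its arc set, the map $C\mapsto\prod_{uv\in C}w_{uv}$ is injective on $\mathscr C(G)$. Hence there is no cancellation in \eqref{eq:pcc}: $\pcc_{n-1}A$ is identically zero if and only if there is \emph{no} cycle cover $C$ with $\kappa(C)$ odd. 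So the lemma reduces to the purely combinatorial claim that $G$ has an even cycle if and only if $G$ has a cycle cover with an odd number of cycles.

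For the "if" direction I would argue the contrapositive: suppose $G$ has no even cycle, i.e., every directed cycle in $G$ has odd length (counting the $1$-cycles, i.e., loops, as odd). Take any cycle cover $C$; it partitions $V(G)$ into $\kappa(C)$ directed cycles, each of odd length, so $n=\sum(\text{cycle lengths})\equiv\kappa(C)\pmod 2$. Thus the parity of $\kappa(C)$ is pinned to the parity of $n$ for \emph{every} cycle cover, which means exactly one of $\pcc_n A,\pcc_{n-1}A$ can be nonzero; the other — and in particular $\pcc_{n-1}A$ when $n$ is even — is identically zero. (One should note $\mathscr C(G)\neq\varnothing$ always, since the all-loops set is a cycle cover; this is why $G$ having a loop at every vertex is assumed.) Care is needed with the parity bookkeeping: the relevant fact is that a cycle cover with all-odd cycles has $\kappa(C)\equiv n$, so the enumerator of the \emph{opposite} parity vanishes.

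For the "only if" direction, suppose $G$ contains an even cycle $D$. I want to build a cycle cover $C$ with $\kappa(C)$ of a prescribed parity — but more cleverly, I'll build \emph{two} cycle covers of opposite parities from $D$, so that at least one of them lands in the class $\kappa(C)\equiv n-1$. Concretely: let $D$ be an even cycle on vertex set $U\subseteq V(G)$, extend it to a cycle cover $C$ by putting loops on all of $V(G)\setminus U$, giving $\kappa(C)=1+(n-|U|)$. Now obtain a second cycle cover $C'$ by replacing $D$ (length $|U|$, even) by $|U|$ loops on $U$, giving $\kappa(C')=|U|+(n-|U|)=n-|U|+|U|=n$; wait — more carefully, $\kappa(C')=(n-|U|)+|U|=n$ only counts loops, so $\kappa(C)$ and $\kappa(C')$ differ by $1-|U|+|U|\cdots$; the clean statement is $\kappa(C)-\kappa(C')=1-|U|\equiv 1\pmod 2$ since $|U|$ is even. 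So $C$ and $C'$ have opposite parities of $\kappa$, hence one of them satisfies $\kappa\equiv n-1\pmod 2$, and its monomial contributes a nonzero term to $\pcc_{n-1}A$, which is therefore not identically zero.

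The main obstacle is the no-cancellation argument (injectivity of $C\mapsto\prod_{uv\in C}w_{uv}$ over the polynomial ring, which legitimizes reading "identically zero" as "empty index set") together with getting the parity arithmetic exactly right — in particular remembering that loops count as odd-length cycles so that $n\equiv\kappa(C)\pmod 2$ in the even-cycle-free case. Everything else is a short combinatorial packaging: pad an even cycle with loops, and toggle it against its all-loops replacement to flip the parity of $\kappa$.
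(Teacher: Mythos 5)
Your proof is correct in substance and takes essentially the same route as the paper: in the backward direction you observe that if every cycle is odd then every cycle cover $C$ has $\kappa(C)\equiv n\pmod 2$, so $\pcc_{n-1}A$ has an empty index set; in the forward direction you pad an even cycle with loops, which is exactly the paper's construction. Making the no-cancellation observation explicit (distinct cycle covers have distinct arc sets, hence distinct squarefree monomials, so ``identically zero'' is equivalent to ``empty sum'') is a genuine improvement in rigor over the paper's proof, which leaves that point tacit.

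Two small things to clean up. First, your intermediate ``reduction'' is mis-stated: you write that the lemma reduces to ``$G$ has a cycle cover with an \emph{odd} number of cycles,'' but the correct condition is $\kappa(C)\equiv n-1\pmod 2$, which is ``odd'' only when $n$ is even; the later parenthetical ``in particular $\pcc_{n-1}A$ when $n$ is even'' has the same parity slip. Fortunately your actual bookkeeping (the equivalence $n\equiv\kappa(C)\pmod 2$ in the even-cycle-free case) is stated in terms of $n$ and is correct, so the argument survives. Second, the ``two cycle covers of opposite parity'' device in the forward direction is an unnecessary indirection: with $\ell=|U|$ even, the single cover $C$ already has $\kappa(C)=1+(n-\ell)\equiv n-1\pmod 2$ directly, which is what the paper computes.
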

\begin{Proof}
When $G$ has only odd cycles, every cycle cover $C$ has 
$\kappa(C)\equiv n\pmod 2$ and thus $\pcc_{n-1} A$ is identically zero. 
Conversely, when $G$ has an even cycle, it can be extended with 
loops---recall that we assume that there is a loop at every vertex 
of $G$---to obtain a cycle cover $C$ with $\kappa(C)\equiv n-1 \pmod 2$. 
Thus, $\pcc_{n-1} A$ is not identically zero.
\end{Proof}

We can access the shortest even cycle by the following standard technique
of polynomial extension. Extend the weighted adjacency matrix 
$A\in R^{n\times n}$ in \eqref{eq:adj-mat} to a matrix 
$A_y\in R[y]^{n\times n}$ over the polynomial ring $R[y]$ in the 
indeterminate $y$ by multiplying all diagonal elements (that is, all 
loop-arc weights) of $A$ with the indeterminate $y$. 
More precisely, the entry $(A_y)_{u,v}$ at row $u\in V(G)$ and 
column $v\in V(G)$ of $A_y$ is defined by
\begin{equation}
\label{eq:adj-mat-y}
(A_y)_{u,v}=\begin{cases}
yw_{uv}   & \text{if $u=v$};\\
w_{uv}    & \text{if $u\neq v$ and $uv\in E(G)$};\\
0         & \text{otherwise}.
\end{cases}
\end{equation}
For a polynomial $p\in R[y]$ and nonnegative integer $\ell$,
let us write $[y^\ell]p\in R$ for the coefficient of the degree-$\ell$ 
monomial in $p$. 

\begin{Lem}[Length of a shortest even cycle]
\label{lem:shortest-even-cycle}
The length of a shortest even cycle in $G$ equals
the smallest positive even $k$ such that $[y^{n-k}]\pcc_{n-1} A_y$ is not
identically zero.
\end{Lem}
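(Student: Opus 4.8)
The plan is to expand the coefficient $[y^{n-k}]\pcc_{n-1}A_y$ as a cancellation-free sum over cycle covers and then read off both inequalities from a parity count. In $A_y$ every loop weight $w_{uu}$ is multiplied by $y$ while every non-loop weight is left unchanged, so the monomial that a cycle cover $C\in\mathscr{C}(G)$ contributes to $\pcc_{n-1}A_y$ is $y^{\lambda(C)}\prod_{uv\in C}w_{uv}$, and hence
\[
[y^{n-k}]\pcc_{n-1}A_y=\sum_{\substack{C\in\mathscr{C}(G):\ \lambda(C)=n-k\\ \kappa(C)\equiv n-1\!\!\!\pmod 2}}\ \prod_{uv\in C}w_{uv}\,.
\]
Distinct cycle covers are distinct arc sets and so yield distinct squarefree monomials in the arc weights, each occurring with coefficient $1\neq 0$ in the nontrivial ring $R$; therefore this expression is not identically zero, in the sense of the remark preceding Lemma~\ref{lem:even-cycle-existence}, precisely when there exists a cycle cover $C$ with $\lambda(C)=n-k$ and $\kappa(C)\equiv n-1\pmod 2$. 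I would then unwind $\lambda(C)=n-k$: it says the $n-k$ loops of $C$ occupy $n-k$ vertices and the non-loop cycles of $C$ partition the remaining $k$ vertices. Writing $c$ for the number of non-loop cycles of $C$ we have $\kappa(C)=(n-k)+c$, so the parity condition $\kappa(C)\equiv n-1\pmod 2$ is equivalent to $c\equiv k-1\pmod 2$; in particular, for even $k$ it asserts that $c$ is odd.

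For the first inequality, suppose $G$ has an even cycle and let $\ell$ be the length of a shortest one. Extending this $\ell$-cycle by loops at the other $n-\ell$ vertices (recall every vertex of $G$ carries a loop) gives a cycle cover $C$ with $\lambda(C)=n-\ell$ and exactly one non-loop cycle, so $c=1$ and $\kappa(C)=n-\ell+1\equiv n-1\pmod 2$. Hence $[y^{n-\ell}]\pcc_{n-1}A_y$ is not identically zero, so the smallest positive even $k$ with that property is at most $\ell$.

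For the reverse inequality, suppose a positive even $k$ has the property and fix a witnessing cover $C$. By the count above its non-loop cycles number $c$ with $c$ odd, and their lengths are positive integers summing to $k$. A sum of an odd number of odd integers is odd, so not all of these lengths can be odd (else $k$ would be odd); hence some non-loop cycle of $C$ has even length, which is at most $k$. So $G$ has an even cycle of length at most $k$, whence $\ell\le k$. Since $\ell$ itself is admissible by the previous paragraph, the smallest positive even admissible $k$ equals $\ell$, which is the claim; and no positive even $k$ is admissible precisely when $G$ has no even cycle, consistently with Lemma~\ref{lem:even-cycle-existence}.

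The only step carrying real content is the parity arithmetic tying together $\kappa(C)$, $\lambda(C)$, the number $c$ of non-loop cycles, and the evenness of $k$ — at bottom the fact that an odd number of odd-length cycles cannot cover an even number of vertices. Everything else is unwinding definitions; the one point needing care is that ``not identically zero'' is meant formally (as a polynomial in the arc weights), which is exactly what legitimises the no-cancellation argument.
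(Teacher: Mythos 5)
Your proof is correct and follows essentially the same route as the paper's: expand $[y^{n-k}]\pcc_{n-1}A_y$ via \eqref{eq:y-pcc}, exhibit the cycle cover $C_H$ to get the upper bound, and use the parity constraint on the number of non-loop cycles to extract an even cycle of length at most $k$ for the lower bound. The only difference is one of exposition: you spell out the parity arithmetic (that $c=\kappa(C)-(n-k)$ must be odd when $k$ is even, so not all non-loop cycle lengths can be odd), where the paper leaves this step implicit in the sentence ``Since $k$ is even, we have that $C$ must contain an even cycle of length at most $k$.''
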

\begin{Proof}
Recall that we write $\lambda(C)$ for the number of loops in a cycle
cover $C\in\mathscr{C}(G)$. From \eqref{eq:pcc} and~\eqref{eq:adj-mat-y}, 
we have
\begin{equation}
\label{eq:y-pcc}
\pcc_{n-1} A_y\,=\!\!\!\!\!\!\!
\sum_{\substack{C\in\mathscr{C}(G)\\\kappa(C)\equiv n-1\!\!\!\pmod 2}}\!\!\!\!\!\!\!\!y^{\lambda(C)}\prod_{uv\in C}w_{uv}\,.
\end{equation}
When $G$ has only odd cycles, every cycle cover $C$ has 
$\kappa(C)\equiv n\pmod 2$, and thus $\pcc_{n-1} A_y$ is identically zero. 
So suppose that $G$ has an even cycle. Let $H$ be a shortest even cycle 
of $G$, and let $\ell$ be its length. Adjoin $n-\ell$ loops to $H$ to obtain 
a cycle cover $C_H$ with $\lambda(C_H)=n-\ell$ and 
$\kappa(C_H)=n-\ell+1\equiv n-1 \pmod 2$, since $\ell$ is even. 
In particular, we observe that the cycle cover $C_H$ defines a term
\begin{equation}
\label{eq:shortest-even-cycle-term}
y^{\lambda(C_H)}\prod_{uv\in C_H}w_{uv}=
y^{n-\ell}\prod_{uv\in E(H)}w_{uv}\prod_{u\in V(G)\setminus V(H)}w_{uu}
\end{equation}
in the enumeration~\eqref{eq:y-pcc}. In particular, 
$[y^{n-\ell}]\pcc_{n-1}A_y$ is not identically zero. 
Finally, suppose that $[y^{n-k}]\pcc_{n-1}A_y$ is not identically zero for 
an even $k$ with $k\leq\ell$. From \eqref{eq:y-pcc} we thus have that $G$ 
has a cycle cover $C$ with $\kappa(C)\equiv n-1\pmod 2$ and 
$\lambda(C)=n-k$. Since $k$ is even, we have that $C$ must contain 
an even cycle of length at most $k$. Since $\ell$ is the length of 
a shortest even cycle in~$G$, we conclude that $k=\ell$.
\end{Proof}

\section{Parity cycle cover enumeration in characteristic two}
\label{sec: parity cycle cover enumeration}

This section develops our main technical contribution, an efficient
algorithm for computing $\pcc_{n-1} A$ over a finite field of characteristic 
two. More precisely, in what follows we assume that the finite field 
$\F_{2^d}$ of order $2^d$ is represented as $\F_{2^d}=\Z_2[x]/\bra g_2\ket$,
where $g_2\in\Z_2[x]$ is a $\Z_2$-irreducible polynomial of degree $d$.
(For background on finite fields, see Lidl and Niederreiter~\cite{LidlN1997}.)
Once this algorithm is available, our main result then follows by standard 
finite-field polynomial identity testing and 
Lemma~\ref{lem:shortest-even-cycle}.

\subsection{Extending to characteristic four}

The core of our approach is to emulate arithmetic in the characteristic-two 
field $\F_{2^d}$ using a (to be defined) extension $\E_{4^d}$ 
in characteristic {\em four}, which supports an efficient algorithm for 
the permanent by a variation of Valiant's algorithm for the 
permanent modulo an even prime power~\cite{Valiant1979}.

Let us now define the ring $\E_{4^d}$ precisely. Recall that we write
$g_2\in\Z_2[x]$ for the $\Z_2$-irreducible polynomial of degree $d$ underlying
$\F_{2^d}=\Z_2[x]/\bra g_2\ket$. For a polynomial $a\in \Z_2[x]$, let us write 
$\bar a\in\Z_4[x]$ for the polynomial obtained by mapping the
$\{0,1\}$-reduced coefficients of $a$ to $\Z_4$. We say $\bar a$ is the 
{\em lift} of~$a$. Set $g_4=\bar{g}_2$ and define 
$\E_{4^d}=\Z_4[x]/\bra g_4\ket$.

Let us now proceed to connect $\F_{2^d}$ and $\E_{4^d}$. Towards this end, 
for a polynomial $s\in\Z_4[x]$, let us write $\underline{s}\in\Z_2[x]$ for
the polynomial obtained by reducing each coefficient of $s$ modulo $2$. 
We say that $\underline{s}$ is the {\em projection} of $s$. Projection
is readily verified to be a ring homomorphism from $\Z_4[x]$ to $\Z_2[x]$. 
Furthermore, projection inverts lift; that is, we have 
$\underline{\overline{a}}=a$ for all $a\in\Z_2[x]$.

We adopt the notational convention of using symbols in the Greek alphabet
for elements of $\F_{2^d}$ and $\E_{4^d}$. We use symbols 
$\alpha,\beta,\gamma,\ldots$ early in the alphabet for elements 
of $\F_{2^d}$ and symbols $\sigma,\tau,\upsilon,\ldots$ late in the 
alphabet for elements of $\E_{4^d}$. 

We extend the lift and project maps from the base polynomial rings 
$\Z_2[x]$ and $\Z_4[x]$ to the polynomial quotient rings $\F_{2^d}=\Z_2[x]/\bra g_2\ket$ 
and $\E_{4^d}=\Z_4[x]/\bra g_4\ket$ as follows. For 
$\alpha=a+\bra g_2\ket\in\F_{2^d}$ represented by $a\in\Z_2[x]$, 
we define the {\em lift} of $\alpha$ by
$\overline{\alpha}=\overline{a_\text{r}}+\bra g_4\ket\in\E_{4^d}$, 
where $a_\text{r}$ is the remainder of the polynomial division of $a$ by $g_2$. 
For $\sigma=s+\bra g_4\ket\in\E_{4^d}$ represented by $s\in\Z_4[x]$,
we define the {\em projection} 
of $\sigma$ by $\underline{\sigma}=\underline{s}+\bra g_2\ket\in\F_{2^d}$.

\begin{Lem}[Lift and project]
\label{lem:lift-and-project}
Both the lift map $\alpha\mapsto\overline{\alpha}$ and the projection
map $\sigma\mapsto\underline{\sigma}$ are well defined. 
Moreover, the projection map is a ring homomorphism from $\E_{4^d}$ 
to $\F_{2^d}$. 
\end{Lem}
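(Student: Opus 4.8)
The plan is to reduce both statements to facts already in place for the base polynomial rings $\Z_2[x]$ and $\Z_4[x]$ and then descend to the quotients $\F_{2^d}=\Z_2[x]/\bra g_2\ket$ and $\E_{4^d}=\Z_4[x]/\bra g_4\ket$.

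For the lift map there is essentially nothing to prove once one observes that $\overline{\alpha}$ is built not from an arbitrary representative of $\alpha\in\F_{2^d}$ but from the canonical one, namely the unique remainder $a_\mathrm{r}$ of degree $<d$ on division by $g_2$. Any two polynomials representing the same class $\alpha$ differ by a multiple of $g_2$ and hence have the same such remainder, so $\overline{a_\mathrm{r}}$, and therefore $\overline{\alpha}=\overline{a_\mathrm{r}}+\bra g_4\ket$, does not depend on the representative. It is worth noting that taking remainders first is genuinely necessary: the naive recipe $\alpha=a+\bra g_2\ket\mapsto\overline{a}+\bra g_4\ket$ fails to be well defined because the lift map on $\Z_2[x]$ is not additive, so changing $a$ by a multiple of $g_2$ need not change $\overline{a}$ by a multiple of $g_4$. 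For the same reason the lift map is not a ring homomorphism, and the lemma does not claim it is.

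For the projection map, suppose $s,s'\in\Z_4[x]$ both represent $\sigma\in\E_{4^d}$, so $s-s'=g_4 t$ for some $t\in\Z_4[x]$. Since $u\mapsto\underline{u}$ is a ring homomorphism $\Z_4[x]\to\Z_2[x]$ and $\underline{g_4}=\underline{\overline{g_2}}=g_2$, we get $\underline{s}-\underline{s'}=\underline{s-s'}=\underline{g_4}\,\underline{t}=g_2\underline{t}\in\bra g_2\ket$, so $\underline{s}+\bra g_2\ket=\underline{s'}+\bra g_2\ket$ and $\underline{\sigma}$ is well defined. For the homomorphism property I would avoid checking additivity and multiplicativity on representatives and instead invoke the universal property of the quotient: the map $\pi\colon\Z_4[x]\to\F_{2^d}$, $s\mapsto\underline{s}+\bra g_2\ket$ (base projection followed by the quotient map), is a composition of ring homomorphisms, its kernel is an ideal, and $\pi(g_4)=g_2+\bra g_2\ket=0$ forces $\bra g_4\ket\subseteq\ker\pi$. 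Hence $\pi$ factors uniquely through $\Z_4[x]/\bra g_4\ket=\E_{4^d}$ as a ring homomorphism, and unwinding the definitions identifies the induced map with $\sigma\mapsto\underline{\sigma}$. One can then also verify $\underline{\overline{\alpha}}=\alpha$, but this is not part of the present statement.

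I do not expect a real obstacle here; the one genuinely substantive point, and the one I would state explicitly, is the containment $\underline{\bra g_4\ket}\subseteq\bra g_2\ket$, which both rescues well-definedness of projection and makes it descend to a homomorphism of quotients — and this is exactly where the design choice $g_4=\overline{g_2}$ is used. A small technical remark worth recording in passing is that $g_4=\overline{g_2}$ is monic of degree $d$ (since $g_2$ is monic over $\Z_2$), hence a non-zero-divisor in $\Z_4[x]$, so that $\E_{4^d}$ is free of rank $d$ over $\Z_4$ with basis $1,x,\dots,x^{d-1}$; this is convenient for the later arithmetic bookkeeping, though it is not needed for this lemma.
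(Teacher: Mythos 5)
Your proof is correct, and for the well-definedness parts it matches the paper: the lift is well defined because it is applied to the canonical remainder representative $a_{\mathrm{r}}$ (your added remark that coefficient-wise lift $\Z_2[x]\to\Z_4[x]$ is not additive, so the naive recipe would fail, is a useful clarification the paper leaves implicit), and projection is well defined because $\underline{g_4}=g_2$ forces $\underline{\bra g_4\ket}\subseteq\bra g_2\ket$. Where you genuinely diverge is the homomorphism claim. The paper verifies $\underline{0}=0$, $\underline{1}=1$, additivity, and multiplicativity by direct computation on representatives; you instead invoke the universal property of the quotient, observing that the composite $\Z_4[x]\to\Z_2[x]\to\F_{2^d}$ is a ring homomorphism whose kernel contains $g_4$ and hence all of $\bra g_4\ket$, so it factors through $\E_{4^d}$, and the induced map is exactly $\sigma\mapsto\underline{\sigma}$. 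Your route is more conceptual and in fact subsumes the well-definedness of projection, so your separate check of that is redundant (though harmless); the paper's direct computation is more elementary and self-contained. Both are sound, and the trailing remark about $g_4$ being monic and hence a non-zero-divisor is correct though, as you note, not needed for this lemma.
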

\begin{proof}
Well-definedness is immediate for the lift map since we first reduce the
polynomial representative $a$ to the remainder $a_\text{r}$ before lifting. 
To see that the projection map is well-defined, let $s,s'\in\Z_4[x]$ 
with $s-s'=qg_4$ for some polynomial $q\in\Z_4[x]$. 
Since $\underline{g_4}=\underline{\overline{g_2}}=g_2$ and 
projection is a ring homomorphism from $\Z_4[x]$ to $\Z_2[x]$, we have
$\underline{s}-\underline{s'}=\underline{s-s'}=\underline{qg_4}=\underline{q}\,\underline{g_4}=\underline{q}g_2$. 
That is, the result of projection is independent of the chosen 
representative $s$ for $\sigma$, and thus $\underline{\sigma}$ is well defined. 
To verify that projection is 
a ring homomorphism from $\E_{4^d}$ to $\F_{2^d}$, by well-definedness 
it is immediate that $\underline{0}=0$ and $\underline{1}=1$. Let 
$\sigma=s+\bra g_4\ket\in\E_{4^d}$ be represented by $s\in\Z_4[x]$ and
$\tau=t+\bra g_4\ket\in\E_{4^d}$ be represented by $t\in\Z_4[x]$.
By well-definedness and the fact that projection is a homomorphism
from $\Z_4[x]$ to $\Z_2[x]$, we have both 
$\underline{\sigma+\tau}=
\underline{s+t}+\bra g_2\ket=
\underline{s}+\underline{t}+\bra g_2\ket=
\underline{\sigma}+\underline{\tau}$ and
$\underline{\sigma\tau}=
\underline{st}+\bra g_2\ket=
\underline{s}\,\underline{t}+\bra g_2\ket=
\underline{\sigma}\,\underline{\tau}$.
\end{proof}

Lifting and projection now enable emulation of arithmetic as follows.

\begin{Lem}[Emulating $\F_{2^d}$-arithmetic in $\E_{4^d}$]
\label{lem:emulation}
Let $e\in\E_{4^d}[x_1,x_2,\ldots,x_m]$ be a polynomial and let
$\underline{e}\in\F_{2^d}[x_1,x_2,\ldots,x_m]$ be 
obtained by projecting all the coefficients of monomials of $e$.
Then, for all $\alpha_1,\alpha_2,\ldots,\alpha_m\in\F_{2^d}$, we have
\begin{equation}
\label{eq:emulation}
2\overline{\underline{e}(\alpha_1,\alpha_2,\ldots,\alpha_m)}=
2e(\overline{\alpha_1},\overline{\alpha_2},\ldots,\overline{\alpha_m})\,.
\end{equation}
\end{Lem}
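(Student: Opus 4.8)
The plan is to reduce the multivariate identity to a statement about ring elements, and then to exploit the fact that multiplication by~$2$ kills the ``ambiguity'' in the lift map, which is precisely the obstruction to lift being a ring homomorphism. First I would observe that it suffices to prove the scalar identity $2\overline{\underline{\sigma}}=2\sigma$ for every $\sigma\in\E_{4^d}$; indeed, once this is known, writing $e=\sum_I \sigma_I x^I$ with $\sigma_I\in\E_{4^d}$, the left-hand side of~\eqref{eq:emulation} becomes $2\sum_I\overline{\underline{\sigma_I}\,\underline{\alpha}^I}$, which by the scalar identity applied to the element $\underline{\sigma_I}\,\underline{\alpha}^I\in\F_{2^d}$ equals $2\sum_I\overline{\underline{\sigma_I}}\,\overline{\alpha}^I$ after re-expanding, matching the right-hand side $2e(\overline{\alpha})=2\sum_I\sigma_I\overline{\alpha}^I$ provided we also know that $2\overline{\underline{\sigma_I}}=2\sigma_I$. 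So the whole lemma collapses to: for all $\sigma\in\E_{4^d}$, $2\sigma=2\overline{\underline{\sigma}}$.

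Next I would prove that scalar identity directly. Take $\sigma=s+\bra g_4\ket$ with $s\in\Z_4[x]$, and let $s_{\mathrm r}$ be the remainder of $s$ divided by $g_4$ in $\Z_4[x]$ (legitimate since $g_4=\overline{g_2}$ is monic). By definition, $\overline{\underline{\sigma}}$ is represented by $\overline{(\underline{s})_{\mathrm r}}$, where $(\underline{s})_{\mathrm r}$ is the remainder of $\underline{s}$ divided by $g_2$ in $\Z_2[x]$. The key point is that reduction mod~$g_4$ followed by projection, and projection followed by reduction mod~$g_2$, agree: this is exactly the computation $\underline{s}-\underline{s_{\mathrm r}}=\underline{s-s_{\mathrm r}}=\underline{q}\,\underline{g_4}=\underline{q}\,g_2$ already carried out in the proof of Lemma~\ref{lem:lift-and-project}, so $\underline{s_{\mathrm r}}=(\underline{s})_{\mathrm r}$. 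Hence $\overline{\underline{\sigma}}$ is represented by $\overline{\underline{s_{\mathrm r}}}$. Now $s_{\mathrm r}$ and $\overline{\underline{s_{\mathrm r}}}$ are two polynomials in $\Z_4[x]$ that reduce to the same polynomial in $\Z_2[x]$: their difference has all coefficients in $\{0,2\}\subseteq\Z_4$, i.e.\ $s_{\mathrm r}-\overline{\underline{s_{\mathrm r}}}=2h$ for some $h\in\Z_2[x]\hookrightarrow\Z_4[x]$. Multiplying by~$2$ and using that $\E_{4^d}$ has characteristic~$4$ (so $2\cdot 2=0$), we get $2s_{\mathrm r}=2\overline{\underline{s_{\mathrm r}}}$ in $\Z_4[x]$, hence $2\sigma=2\overline{\underline{\sigma}}$ in $\E_{4^d}$.

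Finally I would assemble the two pieces, being slightly careful in the multivariate expansion: expand $\underline{e}(\underline{\alpha_1},\ldots,\underline{\alpha_m})$ as a sum of products $\underline{\sigma_I}\prod_j\underline{\alpha_j}^{i_j}$ over monomials $x^I$, note each such product is a single element of $\F_{2^d}$, apply the scalar identity to it, and then use that lift is multiplicative \emph{after multiplication by~$2$}—which is again the same characteristic-$4$ phenomenon—to pull the lift inside and recover $2\sigma_I\prod_j\overline{\alpha_j}^{i_j}$, summing to $2e(\overline{\alpha_1},\ldots,\overline{\alpha_m})$.

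The main obstacle is conceptual rather than computational: the lift map is \emph{not} a ring homomorphism (e.g.\ $\overline{1}+\overline{1}=2\neq 0=\overline{1+1}$ when the characteristic-$2$ sum vanishes), so one cannot simply push lifts through sums and products. The entire content of the lemma is that all such failures of homomorphy live in the ideal $2\E_{4^d}$, which is annihilated by multiplication by~$2$; isolating this—and making sure the remainder-reduction step commutes with projection so that the representative-dependence of lift does not cause trouble—is the crux. Everything else is bookkeeping.
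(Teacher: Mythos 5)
Your proposal is correct in substance, but it takes a slightly different assembly route from the paper. Both arguments pivot on the same scalar identity $2\sigma = 2\overline{\underline{\sigma}}$ in $\E_{4^d}$, and both observe that the failure of lift to be a ring homomorphism lives in $2\E_{4^d}$, which is annihilated by multiplication by $2$. The paper, however, proves the stronger ``reverse emulation'' identity $2e(\tau_1,\ldots,\tau_m)=2\overline{\underline{e}(\underline{\tau_1},\ldots,\underline{\tau_m})}$ by applying the scalar identity \emph{once} to the single element $e(\tau_1,\ldots,\tau_m)$ and then pushing the \emph{projection} inside $e$, which is painless because projection is a genuine ring homomorphism (Lemma~\ref{lem:lift-and-project}); specializing $\tau_i=\overline{\alpha_i}$ then gives the lemma. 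You instead expand $\underline{e}(\alpha_1,\ldots,\alpha_m)$ into monomials and push the \emph{lift} through the sum and products, which forces you to invoke an auxiliary fact that lift is additive and multiplicative only after multiplication by $2$; you flag this correctly as ``the same characteristic-$4$ phenomenon,'' but as written it is gestured at rather than proved, and it is in fact a consequence of the scalar identity plus the projection-homomorphism, so your route implicitly re-derives what the paper uses directly. Your scalar-identity proof is also done at the level of $\E_{4^d}$ and therefore must verify that reduction modulo $g_4$ commutes with projection (via $\underline{s_{\mathrm{r}}}=(\underline{s})_{\mathrm{r}}$), whereas the paper establishes $2u=2\overline{\underline{u}}$ coefficient-wise in $\Z_4[x]$ and transfers it to $\E_{4^d}$ by well-definedness, avoiding the remainder bookkeeping. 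Two small notational slips: you write $\underline{\alpha_j}$ for elements $\alpha_j\in\F_{2^d}$ that do not get projected, and the statement ``apply the scalar identity to the element $\underline{\sigma_I}\,\underline{\alpha}^I\in\F_{2^d}$'' is not literally what happens---the scalar identity applies to the $\E_{4^d}$-element $\sigma_I\prod_j\overline{\alpha_j}^{i_j}$ whose projection equals that $\F_{2^d}$-element. Neither slip is fatal, but they would need to be fixed for the proof to compile logically.
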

\begin{Proof}
For flexibility in what follows, let us prove a slightly stronger reverse 
form of the identity~\eqref{eq:emulation}. Namely, we proceed to show that 
for all polynomials $e\in\E_{4^d}[x_1,x_2,\ldots,x_m]$ and all 
$\tau_1,\tau_2,\ldots,\tau_m\in\E_{4^d}$, we have 
\begin{equation}
\label{eq:reverse-emulation}
2e(\tau_1,\tau_2,\ldots,\tau_m)=
2\overline{\underline{e}(\underline{\tau_1},\underline{\tau_2},\ldots,\underline{\tau_m})}\,;
\end{equation}
then \eqref{eq:emulation} follows from \eqref{eq:reverse-emulation} by
setting $\tau_i=\overline{\alpha_i}$ and observing
that $\underline{\tau_i}=\alpha_i$ for all $i=1,2,\ldots,m$.

To establish \eqref{eq:reverse-emulation}, first observe that 
the project-and-lift-times-$2$ identity $2u=2\overline{\underline{u}}$ holds 
for all polynomials $u\in\Z_4[x]$; indeed, consider the coefficients of $u$ 
and observe the modulo-$4$ congruence $2z\equiv 2(z\bmod 2)$ for all 
integers $z$. Thus, because the projection and lift maps are well-defined 
(Lemma~\ref{lem:lift-and-project}), we have 
$2\upsilon=2\overline{\underline{\upsilon}}$ for all 
$\upsilon=u+\bra g_4\ket\in\E_{4^d}$, and 
for $\upsilon=e(\tau_1,\tau_2,\ldots,\tau_m)$ 
in particular. Finally, use the fact that the projection map is 
a homomorphism (Lemma~\ref{lem:lift-and-project}) on the sum of terms 
of $e$ evaluated at $x_i=\tau_i$ for all $i=1,2,\ldots,m$ to conclude that
$2e(\tau_1,\tau_2,\ldots,\tau_m)
=2\overline{\underline{e(\tau_1,\tau_2,\ldots,\tau_m)}}
=2\overline{\underline{e}(\underline{\tau_1},\underline{\tau_2},\ldots,\underline{\tau_m})}$.
\end{Proof}

Algorithmically, we rely on the standard toolbox 
for basic algebraic operations on univariate polynomials over a 
black-box ring (cf.~\S\ref{sect:poly}); in particular, this enables 
tacit $\tilde O(d)$-time arithmetic in $\F_{2^d}$ and $\E_{4^d}$ 
in what follows.

\subsection{Reduction to the permanent and determinant over $\E_{4^d}$.}

We are now ready for our first reduction.
Let $A\in \F_{2^d}^{n\times n}$ be an $n\times n$ matrix given to us as input. 
We seek to compute the parity cycle cover enumerator $\pcc_{n-1} A$ 
over $\F_{2^d}$. Let us write $\bar A\in\E_{4^d}^{n\times n}$ for 
the entrywise lift of~$A$.
Observing that~\eqref{eq:two-pcc} holds in particular over the polynomial 
ring $\E_{4^d}[w_{uv}:uv\in E]$, it follows immediately 
from~\eqref{eq:emulation} that we have the $\E_{4^d}$-identity
\begin{equation}
\label{eq:pcc-to-per-det}
2\,\overline{\pcc_{n-1} A}
=2\pcc_{n-1}\bar{A}
=\per\bar{A}-\det\bar{A}\,.
\end{equation}
That is, to compute $\pcc_{n-1} A$ over $\F_{2^d}$, 
by \eqref{eq:pcc-to-per-det} it suffices to 
compute $\per\bar A-\det\bar A$ over $\E_{4^d}$ and then
invert the lift-times-$2$ operation to recover $\pcc_{n-1} A$ in $\F_{2^d}$.

Thus, it now remains to compute permanents and determinants fast 
over $\E_{4^d}$.

\subsection{Computing the permanent over $\E_{4^d}$}
\label{sect:per-e}

Throughout this section we work over $\E_{4^d}$ and seek to compute the
permanent $\per M$ of a given $n\times n$ matrix $M\in\E_{4^d}$ with
entries $\sigma_{i,j}\in\E_{4^d}$ for all $i,j\in [n]$ with 
$[n]=\{1,2,\ldots,n\}$. 

It is convenient to start by recalling the standard Leibniz-style definition 
of the permanent. Let us write $S_n$ for the symmetric group of all 
permutations $f:[n]\rightarrow [n]$. The {\em permanent} of $M$ is
\begin{equation}
\label{eq:per}
\per M = \sum_{f\in S_n}\sigma_{1,f(1)}\sigma_{2,f(2)}\cdots \sigma_{n,f(n)}\,.
\end{equation}
Since addition distributes over multiplication, from \eqref{eq:per} 
it follows immediately that the permanent satisfies the branching 
row operation
\begin{equation}
\label{eq:per-row}
\per M = \per M_{i_1,i_2,\tau}' + \per M_{i_1,i_2,\tau}''
\end{equation}
for all rows $i_1,i_2\in [n]$ and scalars $\tau\in\E_{4^d}$, where
we write 
\begin{itemize}
\item[(\ref{eq:per-row}a)]
$M_{i_1,i_2,\tau}'$ for the matrix obtained from $M$ 
by subtracting $\tau$ times row $i_1$ from row $i_2$, and
\item[(\ref{eq:per-row}b)]
$M_{i_1,i_2,\tau}''$ for the matrix obtained from $M$
by replacing row $i_2$ with $\tau$ times row $i_1$.
\end{itemize}
We say that row $i_2$ is {\em similar} to row $i_1$ if 
there exists a scalar $\tau\in\E_{4^d}$ such that row $i_2$ equals $\tau$ 
times row $i_1$. In particular, row $i_2$ is similar to row $i_1$ 
in $M_{i_1,i_2,\tau}''$.

Valiant~\cite{Valiant1979} observed that if a matrix (with integer entries) 
has two pairs of similar rows, then its permanent is zero modulo $4$; 
this is because
each monomial in \eqref{eq:per} picks one element per row but for different
columns, and we can swap the columns to get an identical term. 
Thus, \eqref{eq:per-row} enables an elimination procedure analogous to 
Gaussian elimination but with recursive branching to the two branches 
$M_{i_1,i_2,\tau}'$ and $M_{i_1,i_2,\tau}''$ at every elimination step;
crucially, Valiant's observation gives control on the number of branches 
that must be considered since the $M_{i_1,i_2,\tau}''$-branch can be 
discarded whenever it has two similar rows. A direct implementation of
this strategy leads to Valiant's $\tilde O(n^5)$-time algorithm 
for the permanent modulo 4, and would in a straightforward manner 
lead to an $\tilde O(n^5d)$-time algorithm design over $\E_{4^d}$. 
Our goal here, however, is a faster design that benefits from reverse emulation
and altogether avoids recursion by reduction to determinants in $\F_{2^d}$
on the $M_{i_1,i_2,\tau}''$-branch.

Before describing our algorithm in more detail, let us introduce terminology 
for elimination in~$\E_{4^d}$. For an element $\sigma\in\E_{4^d}$, we say
that $\sigma$ is {\em even} if every coefficient of $\sigma$ is even; 
otherwise $\sigma$ is {\em odd}. We observe that (i) multiplying with
an even element always gives an even result; and (ii) the product of any two
even elements is zero. From (ii) we have that any product in \eqref{eq:per} 
is zero unless it contains at most one even term. This observation enables 
computing $\per M$ using successive row operations~\eqref{eq:per-row} 
to eliminate odd entries until the permanent becomes trivial to compute; 
the following lemma shows how to compute the coefficients $\tau$ for 
the row operations.

\begin{Lem}[Odd-elimination]
\label{lem:odd-elimination}
For all $\sigma,\upsilon\in\E_{4^d}$ with $\sigma$ odd, 
there exists a $\tau\in\E_{4^d}$ such that $\upsilon-\sigma\tau$ is even.
\end{Lem}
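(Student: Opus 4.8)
The plan is to work with polynomial representatives and exploit that an odd element of $\E_{4^d}$ is exactly one whose projection $\underline\sigma\in\F_{2^d}$ is nonzero, hence invertible in the field. First I would fix representatives $s,v\in\Z_4[x]$ of degree less than $d$ for $\sigma,\upsilon$, and note that "$\sigma$ odd" means $\underline s\neq 0$ in $\Z_2[x]/\bra g_2\ket$, i.e.\ $\underline\sigma$ is a nonzero (hence invertible) element of the field $\F_{2^d}$. The natural guess for the coefficient is the lift of the field quotient: set $\tau=\overline{\underline\upsilon\cdot\underline\sigma^{-1}}\in\E_{4^d}$. Then in $\F_{2^d}$ we have $\underline\upsilon-\underline\sigma\cdot\underline\tau=\underline\upsilon-\underline\sigma\cdot\underline\upsilon\,\underline\sigma^{-1}=0$, using that projection is a ring homomorphism (Lemma~\ref{lem:lift-and-project}) and that $\underline{\overline{\beta}}=\beta$ for all $\beta\in\F_{2^d}$.

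Next I would translate "$\underline{\upsilon-\sigma\tau}=0$ in $\F_{2^d}$" back into "$\upsilon-\sigma\tau$ is even in $\E_{4^d}$". The point is that the two notions coincide: an element $\rho=r+\bra g_4\ket\in\E_{4^d}$ has $\underline\rho=0$ in $\F_{2^d}$ precisely when $\underline r\in\bra g_2\ket$ in $\Z_2[x]$; taking $r$ to be the canonical degree-$<d$ representative, $\underline r$ has degree $<d$ as well, so $\underline r\in\bra g_2\ket$ forces $\underline r=0$, i.e.\ every coefficient of $r$ is even, which is exactly the definition of $\rho$ being even. (If one does not want to privilege the reduced representative, one can instead argue directly: multiplying out, $\upsilon-\sigma\tau$ is even iff $2(\upsilon-\sigma\tau)=0$ in $\E_{4^d}$, and $2\rho=0$ is representative-independent, so it suffices to check it for the reduced representative, where it reduces to the coefficientwise statement above.) Applying this to $\rho=\upsilon-\sigma\tau$ with the $\tau$ chosen above completes the argument.

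I would then record, for use in the algorithm's running-time analysis, that $\tau$ is computable in $\tilde O(d)$ time: projection and lift are coefficientwise, and the field inversion $\underline\sigma^{-1}$ and the field multiplication are the standard $\tilde O(d)$ operations in $\F_{2^d}=\Z_2[x]/\bra g_2\ket$ (cf.\ \S\ref{sect:poly}).

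The step I expect to require the most care is the equivalence "even in $\E_{4^d}$ $\iff$ projects to $0$ in $\F_{2^d}$", since "even" was defined in terms of a coefficient representation and one must be careful that it is well defined (independent of the choice of representative modulo $g_4$); the cleanest route is the $2\rho=0$ reformulation, which is manifestly representative-independent and reduces the whole lemma to the elementary modulo-$4$ fact $2z\equiv 2(z\bmod 2)$ already used in the proof of Lemma~\ref{lem:emulation}. Everything else is a routine transport of the field identity $\underline\upsilon=\underline\sigma\cdot(\underline\upsilon\,\underline\sigma^{-1})$ across the lift/projection maps.
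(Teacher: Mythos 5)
Your proof is correct and follows essentially the same argument as the paper's: observe that $\underline\sigma$ is invertible in $\F_{2^d}$, choose $\tau$ so that $\underline\tau=\underline\upsilon\,\underline\sigma^{-1}$, and conclude via the projection homomorphism that $\underline{\upsilon-\sigma\tau}=0$, hence $\upsilon-\sigma\tau$ is even. The paper takes $\tau=\overline{\underline{\sigma}^{-1}}\upsilon$ rather than your $\tau=\overline{\underline\upsilon\cdot\underline\sigma^{-1}}$, but these agree after projection so the distinction is immaterial; the paper also treats the equivalence ``even $\iff$ projection is zero'' as immediate, whereas you carefully justify it via the representative-independent reformulation $2\rho=0$ --- a reasonable extra precaution, but not a different route.
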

\begin{Proof}
Since $\sigma$ is odd the projection $\underline{\sigma}$ is 
nonzero and thus has a multiplicative inverse $\underline{\sigma}^{-1}$ 
in $\F_{2^d}$. 
Take $\tau=\overline{\underline{\sigma}^{-1}}\upsilon$ and observe that
$\underline{\upsilon-\sigma\tau}
=\underline{\upsilon}-\underline{\sigma}\,\underline{\tau}
=\underline{\upsilon}-\underline{\sigma}\,\underline{\sigma}^{-1}\,\underline{\upsilon}=0$ in $\F_{2^d}$.
That is, $\upsilon-\sigma\tau$ is even.
\end{Proof}

Our algorithm for the permanent $\per M$ over $\E_{4^d}$ is as follows.
Maintain a matrix $M$, initialized to the given input; also 
maintain an accumulator taking values in $\E_{4^d}$, initialized to zero. 
Assume initially all rows and columns of $M$ are unmarked. Require the 
invariant that each marked column contains exactly one odd entry, and 
the submatrix of marked rows and marked columns has exactly one odd entry 
in each row. While there remain unmarked columns with odd entries in 
unmarked rows, select one such entry $\sigma=\sigma_{i_1,j}$, which we
assume to lie at row $i_1\in[n]$ and column $j\in[n]$.
Use row-operations~\eqref{eq:per-row} with 
coefficients $\tau$ from Lemma~\ref{lem:odd-elimination} to eliminate all 
other, if any, odd entries $\upsilon=\sigma_{i_2,j}$ in column $j$,
observing by (i) and the invariant that these operations do not introduce 
new odd entries to any of the marked columns; also observe that each 
row-operation~\eqref{eq:per-row} creates two branches, $M_{i_1,i_2,\tau}'$ 
and $M_{i_1,i_2,\tau}''$, of the current matrix $M$---we implement each such 
operation by assigning $M\leftarrow M_{i_1,i_2,\tau}'$ and adding 
the permanent $\per M_{i_1,i_2,\tau}''$ (which we compute using a dedicated 
subroutine described in what follows) to the accumulator. 
Mark row $i_1$, mark column $j$, and iterate. When the iteration stops, 
from the invariant we observe that any remaining unmarked rows must consist 
of even entries only. There can be at most one such row, or otherwise the 
permanent is zero by (ii) and \eqref{eq:per}. Thus, $\per M$ is trivial 
to compute when the iteration stops  since at most one term (defined by the 
odd entries and the entry at the intersection of the unmarked row and 
unmarked column, if any) in \eqref{eq:per-row} is nonzero. Add $\per M$ to 
the accumulator. Return the value of the accumulator and stop.

To process the $M_{i_1,i_2,\tau}''$-branches, we rely on the fact that
rows $i_1$ and $i_2$ in $M_{i_1,i_2,\tau}''$ are similar to reduce the 
task of computing $\per M_{i_1,i_2,\tau}''$ over $\E_{4^d}$ 
to a {\em determinant} computation over a univariate 
polynomial ring $\F_{2^d}[r]$. In essence, we rely on reverse emulation
enabled by similarity. 
\begin{Lem}[Permanent with a similar pair of rows reduces to determinant]
\label{lem:simpair-per-det}
Suppose the rows $i_1$ and $i_2$ in $M\in\E_{4^d}^{n\times n}$ are
similar with $i_1\neq i_2$. Let $B\in\F_{2^d}[r]^{n\times n}$ be 
obtained from the entrywise projection $\underline{M}\in\F_{2^d}^{n\times n}$ 
by
\begin{itemize}
\item[(i)]
multiplying row $i_1$ entrywise with the monomial vector $(1,r,r^2,\ldots,r^{n-1})$; and 
\item[(ii)]
multiplying row $i_2$ entrywise with the monomial vector $(r^{n-1},r^{n-2},\ldots,1)$.
\end{itemize}
Then, $\per M=2\overline{\sum_{\ell=0}^{n-2}[r^\ell]\det B}$.
\end{Lem}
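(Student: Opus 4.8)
The plan is to expand $\per M$ by the Leibniz formula~\eqref{eq:per} grouped along the two similar rows, push the result through the project/lift machinery of \S\ref{sect:poly}, and then recognize what comes out as the advertised truncation of the Laplace expansion of $\det B$; the bridge between the two sides is that $\F_{2^d}$ has characteristic two, so permanents and determinants coincide there and all signs vanish. Throughout I would fix a scalar $\tau\in\E_{4^d}$ witnessing the similarity, so that $\sigma_{i_2,j}=\tau\sigma_{i_1,j}$ for all $j\in[n]$, and for distinct columns $a,b\in[n]$ write $M^{a,b}\in\E_{4^d}^{(n-2)\times(n-2)}$ for the submatrix obtained from $M$ by deleting rows $i_1,i_2$ and columns $a,b$, and $N^{a,b}=\underline{M^{a,b}}\in\F_{2^d}^{(n-2)\times(n-2)}$ for its entrywise projection.

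First I would establish, over $\E_{4^d}$, the two-row expansion
\[
\per M=\sum_{1\le a<b\le n}2\tau\,\sigma_{i_1,a}\,\sigma_{i_1,b}\,\per M^{a,b}\,.
\]
This follows from~\eqref{eq:per} by grouping the permutations $f$ according to the ordered pair $(f(i_1),f(i_2))=(a,b)$ (the restriction of $f$ to $[n]\setminus\{i_1,i_2\}$ then ranges over all bijections onto $[n]\setminus\{a,b\}$, summing to $\per M^{a,b}$), which gives $\per M=\sum_{a\neq b}\sigma_{i_1,a}\sigma_{i_2,b}\per M^{a,b}$; combining the two orderings of each unordered pair (which share the \emph{same} complementary submatrix $M^{a,b}$) and substituting $\sigma_{i_2,\cdot}=\tau\sigma_{i_1,\cdot}$ turns each pair into $2\tau\sigma_{i_1,a}\sigma_{i_1,b}\per M^{a,b}$. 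The right-hand side is $2e$ evaluated at $\tau$ and at the entries of $M$, where $e$ is a polynomial with $\{0,1\}$-coefficients (in fact all coefficients equal $1$), so $\underline e=e$; hence Lemma~\ref{lem:emulation} in its reverse form~\eqref{eq:reverse-emulation}, together with the facts that projection is a ring homomorphism commuting with the formation of permanents and that $\underline\tau\,\underline{\sigma_{i_1,b}}=\underline{\sigma_{i_2,b}}$, gives
\[
\per M=2\,\overline{\sum_{1\le a<b\le n}\underline{\sigma_{i_1,a}}\;\underline{\sigma_{i_2,b}}\;\per N^{a,b}}\,.
\]

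It then remains to identify the element inside the lift with $\sum_{\ell=0}^{n-2}[r^\ell]\det B$. For this I would apply the generalized Laplace expansion of $\det B$ along the two rows $i_1,i_2$. Because $B$ carries no power of $r$ on the rows and columns complementary to $\{i_1,i_2\}$ and $\{a,b\}$, the complementary minors are exactly $\det N^{a,b}$, while the monomial weights $(1,r,\ldots,r^{n-1})$ on row $i_1$ and $(r^{n-1},\ldots,1)$ on row $i_2$ make the $2\times 2$ minor on rows $i_1,i_2$ and columns $a<b$ equal to
\[
r^{\,n-1-(b-a)}\,\underline{\sigma_{i_1,a}}\,\underline{\sigma_{i_2,b}}\;-\;r^{\,n-1+(b-a)}\,\underline{\sigma_{i_1,b}}\,\underline{\sigma_{i_2,a}}\,.
\]
Here the first monomial has $r$-degree in $\{0,1,\ldots,n-2\}$ while the second has $r$-degree at least $n$, and $\det N^{a,b}$ is free of $r$; so extracting and summing the coefficients of $r^0,\ldots,r^{n-2}$ in $\det B$ annihilates the second monomial of every pair and strips $r$ from the first, leaving $\sum_{a<b}(-1)^{i_1+i_2+a+b}\underline{\sigma_{i_1,a}}\,\underline{\sigma_{i_2,b}}\det N^{a,b}$. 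Since $\F_{2^d}$ has characteristic two, the sign equals $1$ and $\det N^{a,b}=\per N^{a,b}$, so this coincides with the bracketed sum above and the claimed identity drops out.

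The steps involving the Leibniz and Laplace expansions are classical, and the passage through the lift and projection maps is formal once Lemmas~\ref{lem:lift-and-project} and~\ref{lem:emulation} are available; the one point that needs care is the last one --- setting up the two-row Laplace expansion with the right sign (immaterial in characteristic two, but it should be stated correctly) and, above all, verifying the clean $r$-degree separation around $r^{n-1}$ that makes the truncation to degrees $0,\ldots,n-2$ extract precisely the monomial $\underline{\sigma_{i_1,a}}\,\underline{\sigma_{i_2,b}}$ with $a<b$ from each $2\times 2$ minor. This is exactly the bookkeeping that the asymmetric monomial weightings in the statement are designed to make work.
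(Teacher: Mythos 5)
Your proof is correct and takes essentially the same approach as the paper's: the paper works with the involution $f\mapsto f'$ on $S_n$ directly, obtaining $\per M=2\sum_{f\in S_n'}\sigma_{1,f(1)}\cdots\sigma_{n,f(n)}$ and then extracting the $f(i_1)<f(i_2)$ terms via the degree analysis $n-1+f(i_1)-f(i_2)\le n-2$, whereas you repackage exactly this as a two-row Laplace expansion over unordered column pairs $\{a,b\}$; the $\times 2$ collapse from similarity, the appeal to reverse emulation \eqref{eq:reverse-emulation}, and the $r$-degree separation around $n-1$ are identical in substance.
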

\begin{Proof}
Let us study the permanent $\per M$ over $\E_{4^d}$ using \eqref{eq:per}.
Select an arbitrary $f\in S_n$ and study the monomial defined by $f$ 
in \eqref{eq:per}. Suppose that $f(i_1)=j_1$ and $f(i_2)=j_2$. 
Since $i_1\neq i_2$ and $f$ is a permutation, we have $j_1\neq j_2$. 
Define $f':[n]\rightarrow[n]$ for all $i\in [n]$ by
\begin{equation}
\label{eq:f-prime}
f'(i)=\begin{cases}
j_2 & \text{if $i=i_1$;}\\
j_1 & \text{if $i=i_2$; and}\\
f(i) & \text{otherwise.}
\end{cases}
\end{equation}
Observe that $f'\neq f$ is a permutation of $[n]$, and furthermore 
$(f')'=f$; that is, the map $f\mapsto f'$ is an involution that 
partitions $S_n$ into disjoint pairs $\{f,f'\}$ of permutations; 
form the subset $S_n'\subseteq S_n$ by selecting from each such pair
the permutation $g\in\{f,f'\}$ with $g(i_1)<g(i_2)$.
Furthermore, since rows $i_1$ and $i_2$ are similar 
in $M$, for all permutations $f\in S_n$ we have
\begin{equation}
\label{eq:sim-monomial}
\sigma_{1,f(1)}\sigma_{2,f(2)}\cdots \sigma_{n,f(n)}=
\sigma_{1,f'(1)}\sigma_{2,f'(2)}\cdots \sigma_{n,f'(n)}\,.
\end{equation}
Thus, from \eqref{eq:per} and \eqref{eq:sim-monomial} we have
\begin{equation}
\label{eq:sim-per}
\per M=2\sum_{f\in S_n'}\sigma_{1,f(1)}\sigma_{2,f(2)}\cdots \sigma_{n,f(n)}\,.
\end{equation}
From the reverse emulation identity \eqref{eq:reverse-emulation} applied to 
the right-hand side of~\eqref{eq:sim-per} it follows that 
to complete the proof it remains to show that over $\F_{2^d}$ we have
\begin{equation}
\label{eq:r-extraction}
\sum_{f\in S_n'}\underline{\sigma_{1,f(1)}}\,\underline{\sigma_{2,f(2)}}\cdots\underline{\sigma_{n,f(n)}}
=\sum_{\ell=0}^{n-2}[r^\ell]\per B
=\sum_{\ell=0}^{n-2}[r^\ell]\det B\,.
\end{equation}
The second equality in \eqref{eq:r-extraction} is immediate since 
determinant and permanent are equal in characteristic 2. To establish the
first equality in \eqref{eq:r-extraction}, let us write $b_{i,j}\in\F_{2^d}[r]$ 
for the entry of $B$ at row $i\in [n]$, column $j\in [n]$. Observe that (i) 
and (ii) imply that  for all $f\in S_n$ we have
\[
b_{i_1,f(i_1)}b_{i_2,f(i_2)}
=\underline{\sigma}_{i_1,f(i_1)}\underline{\sigma}_{i_2,f(i_2)}r^{n-1+f(i_1)-f(i_2)}\,.
\]
In particular, by the construction of $S_n'$ 
we have $n-1+f(i_1)-f(i_2)\leq n-2$ if and only if $f\in S_n'$,
and the first equality in \eqref{eq:r-extraction} thus follows.
\end{Proof}

Lemma~\ref{lem:simpair-per-det} in particular enables us to compute 
$\per M_{i_1,i_2,\tau}''$ in time $\tilde O(n^{\omega}d)$ via 
the Labahn-Neiger-Zhou algorithm.

\begin{Thm}[Labahn, Neiger, and Zhou~{\cite[Theorem~1.1]{LabahnNZ2017}}]
\label{thm:lnz}
Let\/ $\F$ be a finite field and let $B$ be a nonsingular matrix
in $\F[r]^{n\times n}$. There is a deterministic algorithm that
computes $\det B\in\F[r]$ using $\tilde O(n^{\omega}\lceil \mu\rceil)$ 
operations in~$\F$, with $\mu$ being the minimum of the average of
the degrees of the columns of $B$ and that of its rows. 
\end{Thm}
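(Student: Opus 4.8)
Since this is Theorem~1.1 of Labahn, Neiger, and Zhou~\cite{LabahnNZ2017}, the plan is to recall the shape of their argument rather than to reprove it in full. First, because $\det B=\det B^\top$, I would assume without loss of generality that $\mu$ is the average of the column degrees $d_1\geq d_2\geq\cdots\geq d_n$ of $B$, so that $\sum_j d_j=n\mu$ and hence $\deg\det B\leq n\mu$. The first step is a normalization: using fast lattice reduction over $\F[r]$ (itself built from minimal approximant bases), compute a unimodular $U\in\F[r]^{n\times n}$ so that $BU$ is column-reduced, for instance in weak Popov form. Since such a reduction proceeds by simple column operations---adding a polynomial multiple of one column to another, swapping columns, and rescaling a column by a nonzero constant---the quantity $\det U$ is $\pm$ a product of the rescaling constants, a nonzero element of $\F$ that is maintained during the reduction; it therefore suffices to compute the determinant of a column-reduced matrix, for which moreover $\deg\det(BU)=\sum_j d_j$ exactly.

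The main step is a divide-and-conquer on the columns. Partition the column-reduced matrix as $[\,B_{\mathrm L}\mid B_{\mathrm R}\,]$ with about $n/2$ columns in each block; nonsingularity forces $B_{\mathrm L}$ to have full column rank, so its left kernel has dimension about $n/2$. Compute a minimal left-kernel basis whose rows form a matrix $N\in\F[r]^{(n/2)\times n}$ with $NB_{\mathrm L}=0$, together with a complement $C\in\F[r]^{(n/2)\times n}$ such that the $n\times n$ matrix $V$ obtained by stacking $C$ on top of $N$ is unimodular. Then $VB$ is block upper triangular with diagonal blocks $CB_{\mathrm L}$ and $NB_{\mathrm R}$, each of size about $n/2$ and each nonsingular, so $\det B=(\det V)^{-1}\det(CB_{\mathrm L})\det(NB_{\mathrm R})$ with $\det V$ again a trackable unit in $\F$, and one recurses on the two diagonal blocks. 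Degree minimality of the kernel basis and of the completion guarantees that the column degrees of $CB_{\mathrm L}$ and $NB_{\mathrm R}$ sum to $O(n\mu)$, so the total degree budget is essentially conserved down the recursion.

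The complexity then follows by summing the cost over the recursion tree: at depth $i$ there are about $2^i$ subproblems of dimension about $n/2^i$ whose column degrees sum to $O(n\mu)$, and the minimal-kernel-basis and supporting approximant-basis computations run deterministically in $\tilde O(m^\omega\cdot\overline d\,)$ operations on an $m\times m$ input of average column degree $\overline d$; this contributes $\tilde O\bigl((n/2^i)^{\omega-1}\cdot n\mu\bigr)$ at depth $i$, which sums over the $O(\log n)$ depths to $\tilde O(n^\omega\mu)$ because $\omega>1$, and the ceiling in $\lceil\mu\rceil$ absorbs the degenerate case $\mu<1$. The one genuine obstacle---and the reason the full proof of~\cite{LabahnNZ2017} is delicate rather than routine---is degree imbalance: an intermediate submatrix may carry a handful of columns whose degree far exceeds its average, which would make the approximant and kernel basis subroutines cost $\tilde O(m^\omega\cdot\max_j d_j)$ rather than $\tilde O(m^\omega\cdot\overline d\,)$ at that node. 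I would neutralize this exactly as they do, via the partial linearization technique of Gupta, Sarkar, Storjohann, and Valeriote: replace the offending matrix by a slightly larger one with balanced column degrees and determinant equal to the original up to a known factor $\pm r^{k}$; once every node has balanced degrees, the average-degree estimates above apply verbatim and deliver the claimed $\tilde O(n^\omega\lceil\mu\rceil)$ bound.
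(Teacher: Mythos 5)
The paper does not prove this statement: Theorem~\ref{thm:lnz} is invoked as a black-box citation of Labahn, Neiger, and Zhou~\cite{LabahnNZ2017}, so there is no in-paper argument to compare your sketch against. Your reconstruction of the external reference is a reasonable high-level account of that paper's approach---column reduction via unimodular normalization, a divide-and-conquer on the columns driven by minimal left kernel bases and a unimodular completion, and the partial-linearization trick of Gupta, Sarkar, Storjohann, and Valeriote to keep degrees balanced so the $\tilde O(m^\omega\overline d)$ kernel/approximant subroutine costs sum correctly over the recursion tree. But for the purposes of the present paper the theorem is simply imported, not re-derived, and what actually matters in its surrounding text is the small amount of glue the authors supply: because $\det B$ has degree at most $2n-2$ here, they check nonsingularity (the hypothesis of the theorem) by a single random evaluation $\det B(\rho)$ via Bunch--Hopcroft, accepting a $2^{1-d}n$ false-zero probability; and they read off $\mu=O(1)$ from the specific row-scaled construction in Lemma~\ref{lem:simpair-per-det}, which is what turns the generic $\tilde O(n^\omega\lceil\mu\rceil)$ bound into the concrete $\tilde O(n^\omega d)$ cost they need. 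Your sketch is fine as a recollection of~\cite{LabahnNZ2017}, but be careful not to present a proof sketch of a cited theorem as if it were part of the paper under review.
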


In applying Theorem~\ref{thm:lnz}, we need to verify nonsingularity; that is,
that $\det B$ is a nonzero polynomial in the indeterminate $r$. Since 
$\det B$ has degree at most $2n-2$ in $r$, we can select a uniform random
$\rho\in\F_{2^d}$, substitute $r=\rho$ in $B$ to obtain the matrix 
$B(\rho)\in\F_{2^d}^{n\times n}$, and compute $\det B(\rho)$ in 
time $\tilde O(n^\omega d)$ using the algorithm of 
Bunch and Hopcroft~\cite{BunchH1974}.
If $\det B(\rho)\neq 0$, then $B$ is nonsingular and we apply 
Theorem~\ref{thm:lnz} to determine $\det B$. 
If $\det B(\rho)=0$, then we assert that $\det B$ is the zero polynomial
and proceed accordingly. Since a nonzero univariate polynomial of degree 
$\Delta$ has at most $\Delta$ roots, we incorrectly assert that $\det B$ 
is zero with probability at most $2^{1-d}n$.

We conclude that each row operation can thus be implemented in 
$\tilde O(nd+n^\omega d)$ time, with a failure probability of at most
$2^{1-d}n$. Observing that there are at most $n^2$ row operations, 
and taking the union bound over the failure probabilities of each 
operation, we have our main result for the permanent over $\E_{4^d}$:
\begin{Lem}[Permanent over $\E_{4^d}$]
\label{lem:per-e}
There is a randomized algorithm that correctly computes the permanent of 
a given matrix $M\in\E_{4^d}^{n\times n}$ in $\tilde O(n^{2+\omega}d)$ 
time and with probability at least\/ $1-2^{1-d}n^3$.
\end{Lem}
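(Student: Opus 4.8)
The plan is to verify that the row-elimination procedure described just above computes $\per M$ correctly, and then to bound its running time and failure probability. Three things need checking: that the elimination loop is correct (maintains its invariant, terminates, and leaves a matrix whose permanent is trivial to evaluate), that each $M''$-branch permanent is computed correctly with high probability, and that the accounting for time and for the union bound works out.

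\emph{Correctness of the loop.} Iterating the branching identity~\eqref{eq:per-row}, after all the selected odd entries have been processed we have $\per M_{\mathrm{init}}=\per M_{\mathrm{final}}+\sum\per M_{i_1,i_2,\tau}''$, where $M_{\mathrm{final}}$ is the matrix remaining after all updates $M\leftarrow M_{i_1,i_2,\tau}'$ and the sum ranges over the row operations performed; hence the accumulator plus $\per M_{\mathrm{final}}$ is the desired value. I would check that the stated invariant is preserved by each elimination step: since row $i_1$ is still unmarked when it is used, the invariant forces every entry of row $i_1$ in a marked column to be even, so $\tau$ times that entry is even, and subtracting it creates no new odd entry in any marked column; the eliminated entry itself becomes even by Lemma~\ref{lem:odd-elimination}. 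Each outer iteration marks one new column, so there are at most $n$ outer iterations and at most $n(n-1)<n^2$ row operations in total. When the loop halts, every unmarked row consists of even entries only, so if two or more unmarked rows remain then $\per M_{\mathrm{final}}=0$ (a product of two even elements is zero); otherwise $\per M_{\mathrm{final}}$ has at most one nonzero Leibniz term and is evaluated in $\tilde O(nd)$ time.

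\emph{Correctness of the branches and running time.} Lemma~\ref{lem:simpair-per-det} gives $\per M_{i_1,i_2,\tau}''=2\overline{\sum_{\ell=0}^{n-2}[r^\ell]\det B}$ for the matrix $B\in\F_{2^d}[r]^{n\times n}$ constructed there; this calls for $\det B$, which we obtain from Theorem~\ref{thm:lnz} once $B$ is known to be nonsingular. To test nonsingularity we evaluate $B$ at a uniform random $\rho\in\F_{2^d}$ and compute $\det B(\rho)\in\F_{2^d}$ by Bunch--Hopcroft~\cite{BunchH1974}; if it is nonzero we apply Theorem~\ref{thm:lnz}, and if it is zero we declare $\det B$ to be the zero polynomial (so $\per M_{i_1,i_2,\tau}''=0$), which errs only when $\det B$ is a nonzero polynomial with $\rho$ among its at most $2n-2$ roots, an event of probability at most $2^{1-d}n$. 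For the time, each row operation costs $\tilde O(nd)$ for the update $M\leftarrow M_{i_1,i_2,\tau}'$, plus, for the $M''$-branch, $\tilde O(n^2d)$ to form $B$, $\tilde O(n^\omega d)$ for Bunch--Hopcroft, and $\tilde O(n^\omega d)$ for Theorem~\ref{thm:lnz}---here the bound $\tilde O(n^\omega\lceil\mu\rceil)$ becomes $\tilde O(n^\omega)$ operations in $\F_{2^d}$ because only rows $i_1$ and $i_2$ of $B$ have positive degree, so the average row degree $\mu$ is at most $2(n-1)/n<2$---and $\tilde O(nd)$ to extract and sum the coefficients $[r^\ell]\det B$. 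Since $\omega\ge 2$, each of the at most $n^2$ row operations costs $\tilde O(n^\omega d)$, for $\tilde O(n^{2+\omega}d)$ in total; and since each row operation runs one nonsingularity test, each failing with probability at most $2^{1-d}n$, a union bound over the at most $n^2$ tests gives overall failure probability at most $2^{1-d}n^3$.

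The substantive step---reducing a permanent with a pair of similar rows to a determinant over $\F_{2^d}[r]$---is already discharged in Lemma~\ref{lem:simpair-per-det}, so the remaining work is mostly careful bookkeeping. The one point I expect to need genuine care is securing $\mu=O(1)$ in the Labahn--Neiger--Zhou bound: this is precisely what keeps each $M''$-branch at $\tilde O(n^\omega d)$ rather than a higher power of $n$, and it rests on the observation that the monomial-vector scalings in Lemma~\ref{lem:simpair-per-det} touch only two rows of $B$, so that the minimum of the average row degree and the average column degree is a constant.
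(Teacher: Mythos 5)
Your proof is correct and follows essentially the same approach as the paper: iterating the branching identity~\eqref{eq:per-row} with pivots chosen by Lemma~\ref{lem:odd-elimination}, dispatching each $M_{i_1,i_2,\tau}''$-branch to Lemma~\ref{lem:simpair-per-det} and the Labahn--Neiger--Zhou determinant algorithm after a randomized nonsingularity test, and bounding the $n^2$ row operations and union-bounding their failure probabilities. You also supply two details the paper leaves implicit---the verification that the marking invariant forces even entries of an unmarked pivot row in marked columns, and the observation that $B$ has positive-degree entries in only two rows so that $\mu<2$ in Theorem~\ref{thm:lnz}---both of which are correct and worth spelling out.
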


\subsection{Computing the determinant over $\E_{4^d}$}
\label{sect:det-e}

To evaluate the right-hand side of \eqref{eq:pcc-to-per-det} fast, we still
need an algorithm that computes the determinant of a given
matrix $M\in\E_{4^d}^{n\times n}$. This can be accomplished, for example,
in time $\tilde O(n^4d)$ using the division-free determinant algorithm of 
Berkowitz~\cite{Berkowitz1984} over $\E_{4^d}^{n\times n}$. The
asymptotically fastest division-free algorithm due to 
Kaltofen~\cite{Kaltofen1992} run in time $\tilde O(n^{\omega/2+2}d)$
over $\E_{4^d}^{n\times n}$. Both of these algorithms work over an arbitrary 
commutative ring, and it turns out we can obtain a slightly faster design
tailored for the ring $\E_{4^d}$ by a slight modification of our permanent
algorithm in the previous section. Indeed, contrasting with the 
permanent~\eqref{eq:per} and recalling the standard Leibniz definition of the 
determinant
\begin{equation}
\label{eq:det}
\det M = \sum_{f\in S_n}(\sgn f)\sigma_{1,f(1)}\sigma_{2,f(2)}\cdots \sigma_{n,f(n)}\,,
\end{equation}
where we write $\sgn f\in\{-1,1\}$ for the sign of the permutation $f$, 
we observe that the analog of~\eqref{eq:per-row} for the determinant has 
the form 
\begin{equation}
\label{eq:det-row}
\det M = \det M_{i_1,i_2,\tau}'\,,
\end{equation}
in particular since the $M_{i_1,i_2,\tau}''$-branch always cancels for the
determinant due to $f$ and $f'$ having opposing signs for all $f\in S_n$. 
It thus follows we can use an iterative elimination procedure with row 
operations exactly as in the previous section to compute $\det M$, the only 
two modifications to the procedure being that (i) we always disregard 
the $M_{i_1,i_2,\tau}''$-branch since $\det M_{i_1,i_2,\tau}''=0$; 
and (ii) when the iteration stops, we compute the at most 
one {\em signed} term (defined by the odd entries and the entry at 
the intersection of the unmarked row and unmarked column, if any) and add
it to the accumulator. We thus have the following lemma for the determinant
over $\E_{4^d}$:

\begin{Lem}[Determinant over $\E_{4^d}$]
\label{lem:det-e}
There is an algorithm that computes the determinant of a given matrix
$M\in\E_{4^d}^{n\times n}$ in $\tilde O(n^3d)$ time.
\end{Lem}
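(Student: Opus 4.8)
The plan is to adapt the permanent algorithm of \S\ref{sect:per-e} to the determinant, exploiting that the $M_{i_1,i_2,\tau}''$-branch in \eqref{eq:det-row} contributes nothing. First I would justify \eqref{eq:det-row} itself: the determinant is multilinear in the rows, so writing row $i_2$ as the sum of ``row $i_2$ minus $\tau$ times row $i_1$'' and ``$\tau$ times row $i_1$'' gives $\det M=\det M_{i_1,i_2,\tau}'+\det M_{i_1,i_2,\tau}''$, and in $M_{i_1,i_2,\tau}''$ rows $i_1$ and $i_2$ are proportional, so $\det M_{i_1,i_2,\tau}''=0$. (Equivalently, the sign-reversing involution $f\mapsto f'$ from the proof of Lemma~\ref{lem:simpair-per-det} pairs each Leibniz monomial of $M_{i_1,i_2,\tau}''$ with an equal copy of opposite sign, since $\sgn f'=-\sgn f$.) Hence the ordinary elementary row operation $M\leftarrow M_{i_1,i_2,\tau}'$ preserves $\det M$ over $\E_{4^d}$.

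Next I would run the elimination loop verbatim from \S\ref{sect:per-e}: maintain $M$ (initialized to the input), keep all rows and columns unmarked, and while some unmarked column $j$ has an odd entry $\sigma=\sigma_{i_1,j}$ in an unmarked row $i_1$, use Lemma~\ref{lem:odd-elimination} to choose the coefficients $\tau$ that make every other odd entry $\sigma_{i_2,j}$ of column $j$ even via $M\leftarrow M_{i_1,i_2,\tau}'$, then mark row $i_1$ and column $j$. Two modifications suffice: we never spawn or accumulate the $M_{i_1,i_2,\tau}''$-branch, because $\det M_{i_1,i_2,\tau}''=0$; and at termination we output a \emph{signed} value. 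The invariant of \S\ref{sect:per-e} --- each marked column has its unique odd entry inside a marked row, and the submatrix of marked rows and marked columns has exactly one odd entry per row --- is preserved for the same reason: a pivot row is unmarked, hence even in every marked column, so subtracting a multiple of it introduces no odd entry into any marked column. When the loop stops, every unmarked row is even; if there are two or more of them, then every monomial of \eqref{eq:det} has at least two even factors, whose product is zero, so $\det M=0$. Otherwise the odd entries of the transformed matrix, restricted to the marked rows and columns, form a permutation pattern, and there is at most one (all-even) unmarked row and column; its determinant is therefore a sum of $O(n)$ signed Leibniz monomials (a single monomial once all rows become marked), which I evaluate directly. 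Since the transformation was a sequence of elementary row operations, this value equals $\det$ of the input.

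For the running time: the loop processes at most $n$ columns and eliminates at most $n$ odd entries per column, so it performs $O(n^2)$ row operations; each updates $n$ entries of $\E_{4^d}$ and invokes one inversion in $\F_{2^d}$ for Lemma~\ref{lem:odd-elimination}, costing $\tilde O(nd)$, for $\tilde O(n^3d)$ in total. The terminal evaluation uses $O(n^2)$ multiplications in $\E_{4^d}$, i.e.\ $\tilde O(n^2d)$, plus an $O(n\log n)$ computation of the sign of the surviving permutation(s). Everything is dominated by $\tilde O(n^3d)$.

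I do not expect a genuine obstacle here; the point is rather that, because the $M_{i_1,i_2,\tau}''$-branch vanishes, the determinant never needs the Labahn--Neiger--Zhou or Bunch--Hopcroft subroutines that drive the permanent algorithm of \S\ref{sect:per-e}, which is exactly why the bound $\tilde O(n^3d)$ improves on the permanent's $\tilde O(n^{2+\omega}d)$. The only genuinely new bookkeeping relative to \S\ref{sect:per-e} is the sign: reading off the permutation realized by the surviving odd entries together with the lone even ``link'' at the intersection of the unmarked row and column, and --- in the degenerate one-unmarked-row case --- summing the handful of transposition-type terms it spawns, together with the sign-reversing-involution check that kills the $M_{i_1,i_2,\tau}''$-branch.
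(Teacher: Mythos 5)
Your proposal is correct and takes the same route as the paper: reuse the elimination loop from \S\ref{sect:per-e} verbatim, discard the $M_{i_1,i_2,\tau}''$-branch because its determinant vanishes (the involution $f\mapsto f'$ now reverses sign, or equivalently multilinearity plus proportional rows), and evaluate the residual matrix with a signed terminal contribution, for a total of $\tilde O(n^3d)$. If anything you are slightly more careful than the paper's one-sentence sketch, which asserts ``at most one signed term'' survives at termination: as you observe, when exactly one unmarked row remains, odd entries in the lone unmarked column at \emph{marked} rows can contribute up to $O(n)$ surviving signed Leibniz monomials, and your explicit $O(n)$-term terminal evaluation accounts for these at no extra asymptotic cost.
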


\subsection{Parity cycle cover enumeration over $\F_{2^d}$}
\label{sect:pcc-ff}

Let us now summarize our main contribution in this section. Given as
input an $n\times n$ matrix $A\in\F_{2^d}^{n\times n}$, we have a randomized
algorithm that in time $\tilde O(n^{\omega+2}d)$ computes 
$\pcc_{n-1} A\in\F_{2^d}$. Indeed, from the given $A$ we first compute 
the entrywise lift $\bar A\in\E_{2^d}^{n\times n}$, then 
use Lemma~\ref{lem:per-e} to compute the permanent 
$\per\bar A\in\E_{4^d}$, then use Lemma~\ref{lem:det-e} to 
compute the determinant $\det\bar A\in\E_{4^d}$, 
then compute the difference $\per\bar A-\det\bar A\in\E_{4^d}$, and finally 
invert the lift-times-$2$ operation on the difference to recover 
by \eqref{eq:pcc-to-per-det} the
parity cycle cover enumeration $\pcc_{n-1} A\in\F_{2^d}$. 
We thus have:

\begin{Lem}[Parity cycle cover enumerator over $\F_{2^d}$]
\label{lem:pcc-ff}
There is a randomized algorithm that correctly computes the parity cycle 
cover enumerator $\pcc_{n-1}A$ of a given matrix $A\in\F_{2^d}^{n\times n}$ 
in $\tilde O(n^{2+\omega}d)$ time and with probability at least\/ $1-2^{1-d}n^3$.
\end{Lem}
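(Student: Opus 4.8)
The plan is to follow the reduction chain $\pcc_{n-1}A\rightsquigarrow\per\bar A-\det\bar A\rightsquigarrow\pcc_{n-1}A$ supplied by the emulation identity \eqref{eq:pcc-to-per-det}, using the fast permanent and determinant algorithms over $\E_{4^d}$ for the middle link. First I would compute the entrywise lift $\bar A\in\E_{4^d}^{n\times n}$ of the input matrix $A$: this is $n^2$ applications of the lift map, each costing $\tilde O(d)$ time, hence $\tilde O(n^2d)$ in total, which will be a lower-order term. Next I would invoke Lemma~\ref{lem:per-e} to obtain $\per\bar A\in\E_{4^d}$ in time $\tilde O(n^{2+\omega}d)$ with failure probability at most $2^{1-d}n^3$, and Lemma~\ref{lem:det-e} to obtain $\det\bar A\in\E_{4^d}$ deterministically in time $\tilde O(n^3d)$. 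Since $\omega\geq 2$, the determinant cost is dominated by the permanent cost.

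It then remains to form the difference $\per\bar A-\det\bar A\in\E_{4^d}$ (one subtraction in $\E_{4^d}$, hence $\tilde O(d)$) and to invert the ``lift-times-two'' map to read off $\pcc_{n-1}A\in\F_{2^d}$. Here I would appeal to \eqref{eq:pcc-to-per-det}, which was obtained by applying the emulation identity of Lemma~\ref{lem:emulation} to the polynomial identity \eqref{eq:two-pcc} over the polynomial ring $\E_{4^d}[w_{uv}:uv\in E]$; the point is that every monomial of $\pcc_{n-1}$ (as a polynomial in the formal arc-weights $w_{uv}$) has coefficient $0$ or $1$, so its lift has a reduced representative with $\{0,1\}$-coefficients, and therefore the element $\per\bar A-\det\bar A=2\,\overline{\pcc_{n-1}A}$ has a unique reduced representative $s\in\Z_4[x]$ of degree $<d$ all of whose coefficients lie in $\{0,2\}$. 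Dividing every coefficient of $s$ by $2$ yields a $\{0,1\}$-polynomial of degree $<d$, which is precisely the reduced representative of $\pcc_{n-1}A$ in $\F_{2^d}=\Z_2[x]/\bra g_2\ket$; this recovery step again costs $\tilde O(d)$.

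For correctness, the returned value equals $\pcc_{n-1}A$ exactly by \eqref{eq:pcc-to-per-det}, conditioned only on the permanent subroutine of Lemma~\ref{lem:per-e} succeeding, as every other step is deterministic and exact; taking that failure probability verbatim gives overall success probability at least $1-2^{1-d}n^3$. Summing the cost estimates above gives total time $\tilde O(n^{2+\omega}d)$, as claimed. I do not expect a genuine obstacle here: the technical weight has already been carried by Lemmas~\ref{lem:per-e} and~\ref{lem:det-e} and by the emulation machinery of \S\ref{sec: parity cycle cover enumeration}. The one step deserving a moment of care is checking that the ``invert lift-times-two'' operation is well defined and efficiently computable---that the reduced representative of $\per\bar A-\det\bar A$ really does have all coefficients in $\{0,2\}$ and that halving them recovers the correct $\F_{2^d}$-element---but this is immediate from uniqueness of reduced representatives together with the $\{0,1\}$-coefficient restriction on $\pcc_{n-1}$ that makes \eqref{eq:pcc-to-per-det} applicable in the first place.
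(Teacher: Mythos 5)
The proposal is correct and matches the paper's proof essentially step by step: lift $A$ entrywise to $\E_{4^d}$, invoke Lemma~\ref{lem:per-e} for the permanent and Lemma~\ref{lem:det-e} for the determinant, form their difference, and invert lift-times-two via \eqref{eq:pcc-to-per-det}, with the permanent step dominating both the running time and the failure probability. One small expository note: the fact that the reduced representative of $\overline{\pcc_{n-1}A}$ has $\{0,1\}$-coefficients (so that $\per\bar A-\det\bar A$ has a $\{0,2\}$-coefficient representative, invertible by halving) follows immediately from the definition of the lift map, which by construction always returns a $\{0,1\}$-coefficient representative of degree $<d$ for any element of $\F_{2^d}$; the appeal to the $\{0,1\}$-coefficient structure of $\pcc_{n-1}$ as a multivariate polynomial is a detour that is not needed for this step.
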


As a concluding remark, let us observe that the elimination steps 
in \S\ref{sect:per-e} and \S\ref{sect:det-e} trace identical 
$M_{i_1,i_2,\tau}'$-branches towards the base case, and thus time savings
can be obtained in an implementation by accumulating both $\per M$ and 
$\det M$ simultaneously.

\section{An efficient randomized algorithm for shortest even cycle}

This section proves Theorem~\ref{thm:main}, relying on Lemma~\ref{lem:pcc-ff}
as the key subroutine. We start by developing well-known preliminaries 
in polynomial identity testing.

\subsection{Randomized polynomial identity testing}

We recall a squarefree variant of 
the DeMillo--Lipton--Schwartz--Zippel lemma~\cite{DeMilloL1978,Schwartz1980,Zippel1979}. Let $\F$ be a finite field. Let us write $|\F|$ for 
the order of $\F$.
We say that a monomial $w_1^{d_1}w_2^{d_2}\cdots w_m^{d_m}$ is
{\em squarefree} if $d_1,d_2,\ldots,d_m\in\{0,1\}$.
A polynomial $p\in\F[w_1,w_2,\ldots,w_m]$ is {\em squarefree} if all of 
its monomials are squarefree.

\begin{Lem}[Squarefree DeMillo--Lipton--Schwartz--Zippel]
\label{lem:squarefree-pit}
Let $p\in\F[w_1,w_2,\ldots,w_m]$ be a squarefree and nonzero polynomial 
of degree at most $\Delta$. Suppose that $\beta_1,\beta_2,\ldots,\beta_m\in\F$ 
are drawn independently and uniformly at random. Then, 
$p(\beta_1,\beta_2,\ldots,\beta_m)\neq 0$ with probability
at least $\bigl(1-\frac{1}{|\mathbb{F}|}\bigr)^\Delta$.
\end{Lem}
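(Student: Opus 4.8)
The plan is to prove the squarefree DeMillo–Lipton–Schwartz–Zippel lemma by induction on the number of indeterminates $m$, exploiting squarefreeness to beat the generic bound $1-\Delta/|\F|$. The base case $m=0$ is trivial: a nonzero constant polynomial evaluates to a nonzero value with probability $1\geq(1-1/|\F|)^\Delta$. For the inductive step, view $p\in\F[w_1,\ldots,w_m]$ as a polynomial in $w_m$ with coefficients in $\F[w_1,\ldots,w_{m-1}]$. Because $p$ is squarefree, only the exponents $0$ and $1$ on $w_m$ occur, so we may write $p = p_0 + w_m p_1$, where $p_0,p_1\in\F[w_1,\ldots,w_{m-1}]$ are themselves squarefree and do not involve $w_m$.

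The key case split is on whether $p_1$ is the zero polynomial. If $p_1\neq 0$, then $p_1$ is squarefree of degree at most $\Delta-1$, so by the induction hypothesis $p_1(\beta_1,\ldots,\beta_{m-1})\neq 0$ with probability at least $(1-1/|\F|)^{\Delta-1}$; conditioned on this event, $p$ becomes a nonzero degree-$1$ polynomial in $w_m$ alone, which has at most one root, so the further (independent) draw of $\beta_m$ avoids it with probability at least $1-1/|\F|$. Multiplying gives $(1-1/|\F|)^\Delta$. If instead $p_1=0$, then $p=p_0$ does not involve $w_m$ at all; it is squarefree of degree at most $\Delta$ in the remaining $m-1$ variables, and is nonzero, so the induction hypothesis directly yields probability at least $(1-1/|\F|)^\Delta$ (the draw of $\beta_m$ is irrelevant). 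In both branches the bound $(1-1/|\F|)^\Delta$ holds, completing the induction.

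The only mild subtlety — and the closest thing to an obstacle — is bookkeeping the degree parameter correctly through the recursion: when $p_1\neq 0$ one must note that $\deg p_1\le\Delta-1$ (since $p = p_0 + w_m p_1$ has degree $\le\Delta$ and the $w_m$ contributes $1$ to the degree of each monomial of $w_m p_1$), and that $\deg p_0\le\Delta$, so that the exponents in the product $(1-1/|\F|)^{\Delta-1}\cdot(1-1/|\F|)$ and in the $p_1=0$ case both land on exactly $\Delta$. There is no genuine difficulty here; squarefreeness does all the work by capping the $w_m$-degree at $1$, which is exactly what converts the usual additive loss into the multiplicative factor $1-1/|\F|$ per variable. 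I would present the argument in the induction form above, emphasizing the two-branch structure and the independence of the coordinate draws, and leave the degree arithmetic as a one-line remark.
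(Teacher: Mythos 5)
Your proof is correct and uses essentially the same argument as the paper: peel off a single variable (writing $p$ linearly in it thanks to squarefreeness), apply the induction hypothesis to the coefficient polynomial whose degree drops by one, and multiply by the $1-1/|\mathbb{F}|$ factor from the univariate degree-one step. The only cosmetic difference is that the paper inducts on the degree $\Delta$ and chooses an indeterminate actually occurring in $p$ (so that the linear coefficient is automatically nonzero), whereas you induct on the number $m$ of indeterminates and handle the vacuous case $p_1=0$ by an explicit case split.
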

\begin{proof}
By induction on $\Delta$. The base case $\Delta=0$ is immediate. 
Let $\Delta\geq 1$. Since $p$ is squarefree, there exists an indeterminate 
$w_k$ and $p',p''\in\F[w_1,w_2,\ldots,w_{k-1},w_{k+1},\ldots,w_m]$ 
such that (i) $p=w_kp'+p''$ and (ii) $p'$ has degree at most $\Delta-1$. 
Let $\gamma=p''(\beta_1,\beta_2,\ldots,\beta_{k-1},\beta_{k+1},\ldots,\beta_m)$. By the induction hypothesis, 
$\eta=p'(\beta_1,\beta_2,\ldots,\beta_{k-1},\beta_{k+1},\ldots,\beta_m)\neq 0$ with probability at least 
$\bigl(1-\frac{1}{|\mathbb{F}|}\bigr)^{d-1}$. Conditioning on this event,
we have $p(\beta_1,\beta_2,\ldots,\beta_m)=\beta_k\eta+\gamma\neq 0$ 
if and only if $\beta_k\neq-\gamma\eta^{-1}$, which happens with 
probability $1-\frac{1}{|\mathbb{F}|}$ due to independence. 
Thus, $p(\beta_1,\beta_2,\ldots,\beta_m)\neq 0$ with probability at 
least $\bigl(1-\frac{1}{|\mathbb{F}|}\bigr)^\Delta$.
\end{proof}

\subsection{Algorithm for shortest even cycle}
\label{sec:secalg}

We are now ready for our main algorithm. Let us start by setting up the
algebraic context for the algorithm and only then give the algorithm in detail.

To set the context, let $G$ be an $n$-vertex simple directed graph with 
a loop at every vertex. Recall from Lemma~\ref{lem:shortest-even-cycle} that 
the smallest positive even $k$ such that $[y^{n-k}]\pcc_{n-1}A_y$ is not 
identically zero is the length of a shortest even cycle in $G$. 
Our algorithm witnesses such a value $k$, if any, with high probability 
by applying squarefree randomized polynomial identity testing 
(Lemma~\ref{lem:squarefree-pit}) to the polynomial 
$p=[y^{n-k}]\pcc_{n-1}A_y\in\F_{2^d}[w_{uv}:uv\in E(G)]$. 
That is, we choose the ring $R$ in 
Lemma~\ref{lem:shortest-even-cycle} to be the polynomial ring 
$\F_{2^d}[w_{uv}:uv\in E(G)]$, and choose the arc weights 
in \eqref{eq:adj-mat-y} to equal the indeterminates $w_{uv}$ of this 
polynomial ring. With these choices, $[y^{n-k}]\pcc_{n-1}A_y$ is not 
identically zero if and only if it is a nonzero polynomial, 
so Lemma~\ref{lem:squarefree-pit} applies. We would like to stress here that
the algorithm never works with the polynomial $p$ in a full explicit 
representation since this would be computationally too expensive; rather, 
the algorithm merely seeks to {\em witness that the polynomial is nonzero} 
by establishing that $p(\beta)=p(\beta_{uv}:uv\in E(G))\neq 0$ for an 
independent uniform random choice of values $\beta_{uv}\in\F_{2^d}$ 
for $uv\in E(G)$.

Let us now present the algorithm in detail. 
Let the given input be an $n$-vertex simple directed graph $G$ with 
a loop at every vertex. We may assume $n\geq 2$; indeed, otherwise $G$ 
has no even cycle. The algorithm tacitly relies on the standard 
algorithmic toolbox for univariate polynomials over a black-box 
ring to enable $\tilde O(d)$-time arithmetic operations 
in $\F_{2^d}$ (cf.~\S\ref{sect:poly}). 
\begin{itemize}
\item[(S1)]
Set $d\leftarrow 5\lceil\log_2 n\rceil$ and let 
$\gamma_0,\gamma_1,\ldots,\gamma_n\in\F_{2^d}$ be arbitrary distinct values.
\item[(S2)]
For each arc $uv\in E(G)$ independently, 
draw a uniform random value $\beta_{uv}\in\F_{2^d}$. 
\item[(S3)]
For each $\ell=0,1,\ldots,n$ in turn, compute 
$\delta_\ell\leftarrow\pcc_{n-1} A_{\gamma_\ell}(\beta)\in\F_{2^d}$ 
using the algorithm in Lemma~\ref{lem:pcc-ff} on the matrix 
$A_{\gamma_\ell}(\beta)\in\F_{2^d}^{n\times n}$ whose entry at each 
row $u\in V(G)$ and each column $v\in V(G)$ is defined by
\[
\bigl(A_{\gamma_\ell}(\beta)\bigr)_{u,v}=
\begin{cases}
\gamma_\ell\beta_{uu} & \text{if $u=v$};\\
\beta_{uv}            & \text{if $u\neq v$ and $uv\in E(G)$};\\
0                     & \text{otherwise}.
\end{cases}
\]
[Observe that we get $A_{\gamma_\ell}(\beta)$ by assigning 
$y\leftarrow\gamma_\ell$ and $w_{uv}\leftarrow\beta_{uv}$ 
for all $uv\in E(G)$ in \eqref{eq:adj-mat-y}.
We also observe that, for all possible outcomes of (S2), 
the probability for the bad event that at least one of the $n+1$ applications 
of the randomized algorithm in Lemma~\ref{lem:pcc-ff} fails is, by the union 
bound, at most $2^{2-d}n^4=O(n^{-1})$. Let us condition in what follows 
that the bad event does not happen.]
\item[(S4)]
    Determine the coefficients of the unique polynomial $q\in\F_{2^d}[y]$ of degree at most $n$ 
that satisfies $q(\gamma_\ell)=\delta_\ell$ for all $\ell=0,1,\ldots,n$.
For example, by Lagrange interpolation we have
\[
q=\sum_{\ell=0}^n\delta_\ell\prod_{\substack{j=0\\j\neq\ell}}^n \frac{y-\gamma_j}{\gamma_\ell-\gamma_j}\,.
\]
[Here we have $q=\pcc_{n-1}A_y(\beta)$ 
by (S3), \eqref{eq:adj-mat-y}, and \eqref{eq:y-pcc}.]
\item[(S5)]
Return the smallest positive even $k$ such that $[y^{n-k}]q\neq 0$; 
or, when no such $k$ exists, assert that $G$ has no even cycle.\\{}
[To analyse correctness, observe that when $G$ has no even cycle, 
we have $q=0$ and thus the algorithm will assert that $G$ has no even cycle. 
So let $k$ be the length of a shortest even cycle of $G$. 
Observing that (i) $p=[y^{n-k}]\pcc_{n-1}A_y$ has degree $n$ in the 
indeterminates $w_{uv}$ and (ii) $[y^{n-k}]q=p(\beta)$, 
from Lemma~\ref{lem:shortest-even-cycle} and 
Lemma~\ref{lem:squarefree-pit} we have that $[y^{n-k}]q=p(\beta)\neq 0$
with probability at least 
$\bigl(1-2^{-d}\bigr)^n\geq \bigl(1-n^{-5}\bigr)^n=1-O(n^{-4})$.
Thus, taking into account the conditioning of the bad event in (S3)
not happening, the algorithm succeeds with probability at least 
$1-O(n^{-1})$.]
\end{itemize}
We observe that the running time is dominated by (S3), which executes 
$n+1$ times the $\tilde O(n^{\omega+2}d)$-time algorithm 
in Lemma~\ref{lem:pcc-ff}.
Since $d=O(\log n)$, the running time of the algorithm is 
$\tilde O(n^{\omega+3})$. This completes the proof of Theorem~\ref{thm:main}.

\section{A faster randomized algorithm for detecting an even cycle}
\label{sect:nested}

This section develops a faster algorithm for the existence problem of
even cycles in bounded genus graphs, in particular planar graphs.
The algorithm is based on Lemma~\ref{lem:even-cycle-existence}, 
randomized polynomial identity testing, and more fine-grained {\em pivot-free} 
versions of the elimination procedures underlying Lemma~\ref{lem:pcc-ff}. 
In particular, we will rely on the technique of {\em nested dissection}, 
originally introduced by George~\cite{George1973} to obtain speed-up and 
space savings when solving systems of linear equations resulting from 
a 2-dimensional mesh, and generalized by Yuster~\cite{Yuster2008} and later by 
Alon and Yuster~\cite{AlonY2013} to matrices supporting pivot-free Gaussian
elimination over a finite field. 

\begin{Thm}[Even cycles in bounded genus graphs]
  \label{thm:boundedgenus}
  Given a directed graph $G$ of bounded genus with $n$ vertices, detecting whether $G$ has an even cycle or not can be done in time $\tilde O(n^{2+\frac{1}{2}})$ with probability at least $1-O(n^{-1})$. The length of a shortest even cycle can be found in time $\tilde O(n^{3+\frac{1}{2}})$.
\end{Thm}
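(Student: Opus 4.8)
The plan is to instantiate the whole machinery of \S\ref{sec: parity cycle cover enumeration} and \S\ref{sec:secalg} over $\F_{2^d}$ with $d=O(\log n)$ as before, but to replace the "black-box" matrix algorithms (Lemma~\ref{lem:per-e}, Lemma~\ref{lem:det-e}) and the Labahn--Neiger--Zhou determinant step by \emph{pivot-free, nested-dissection} variants that exploit a good separator hierarchy of $G$. First I would recall that a graph of bounded genus on $n$ vertices has an $O(\sqrt n)$-separator and hence, by the standard recursive construction of George~\cite{George1973} (in the generalized form of Alon and Yuster~\cite{AlonY2013}), admits a nested-dissection ordering under which Gaussian-style elimination on the (lifted) adjacency matrix runs in time $\tilde O(n^{\omega/2})$ field operations, provided the elimination never needs to pivot. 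The key observation making this applicable in our ring-emulation setting is that the elimination procedures underlying Lemma~\ref{lem:per-e} and Lemma~\ref{lem:det-e} are \emph{driven entirely by the projection to $\F_{2^d}$}: an entry is eliminated exactly when it is odd, i.e.\ when its projection is a unit, and the coefficient $\tau$ of Lemma~\ref{lem:odd-elimination} is the lift of an $\F_{2^d}$-computation. So to get a pivot-free run one randomizes: choose the arc weights $\beta_{uv}\in\F_{2^d}$ uniformly at random (as in (S2)), and argue by a Schwartz--Zippel-type computation that, with probability $1-O(n^{-1})$, every leading minor encountered along the fixed nested-dissection order has a \emph{nonzero} (hence odd, after lift) pivot, so that elimination proceeds without row swaps; the relevant minors are, up to sign, squarefree polynomials in the $\beta_{uv}$ of degree $O(n)$, so Lemma~\ref{lem:squarefree-pit} applies with $|\F_{2^d}|\geq n^5$.

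With a pivot-free elimination in hand, I would run the permanent algorithm of \S\ref{sect:per-e} along the nested-dissection order: the $M''_{i_1,i_2,\tau}$ branch (the "similar pair of rows" branch) still reduces, by Lemma~\ref{lem:simpair-per-det}, to a determinant over $\F_{2^d}[r]$, but now that determinant is itself computed by nested dissection over the \emph{bivariate}-free structure---more precisely, one substitutes a random $\rho\in\F_{2^d}$ for $r$ (exactly as in the paragraph after Theorem~\ref{thm:lnz}), reducing to a single $\F_{2^d}$-determinant that the nested-dissection routine evaluates in $\tilde O(n^{\omega/2}d)$ time; the polynomial in $r$ is recovered by interpolation at $O(n)$ points, each an $\tilde O(n^{\omega/2}d)$ determinant, for $\tilde O(n^{\omega/2+1}d)$ per branch. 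Since there are $O(n)$ elimination steps and hence $O(n)$ such branches, the permanent over $\E_{4^d}$ costs $\tilde O(n^{\omega/2+2}d)$; for detecting \emph{existence} of an even cycle we only need $\pcc_{n-1}A$ for a single weight assignment (Lemma~\ref{lem:even-cycle-existence}), and with $\omega/2<3/2$ and $d=O(\log n)$ this is $\tilde O(n^{2+1/2})$. For the shortest-even-cycle version one additionally carries the indeterminate $y$ on the diagonal and interpolates over $n+1$ values $\gamma_\ell$ exactly as in (S1)--(S5), multiplying the running time by a further factor of $n$, giving $\tilde O(n^{3+1/2})$; correctness then follows verbatim from Lemma~\ref{lem:shortest-even-cycle}, Lemma~\ref{lem:squarefree-pit}, and the union bound over the $O(n)$ invocations together with the pivot-freeness event.

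The main obstacle I anticipate is establishing pivot-freeness \emph{along a nested-dissection order} with only a polynomially small failure probability, and making it compatible with the ring emulation. One subtlety is that the elimination order is dictated by the separator hierarchy of $G$, which is fixed \emph{before} the random weights are drawn, so the relevant minors are genuinely random polynomials and Lemma~\ref{lem:squarefree-pit} applies---but one must check that all $O(n)$ leading submatrices that arise (not just the full matrix) are simultaneously nonsingular, which costs a $\Delta=O(n^2)$ in the exponent of $(1-|\F|^{-1})^\Delta$ and hence forces $|\F_{2^d}|\geq n^{c}$ for a slightly larger constant $c$; this only changes $d$ by a constant factor and is absorbed in the $\tilde O(\cdot)$. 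A second subtlety is that after a row operation in the permanent algorithm the matrix is still over $\E_{4^d}$, and one must confirm that the nested-dissection \emph{fill-in} pattern is governed by the \emph{graph} (the sparsity pattern), which is preserved under the emulation because even entries never create new odd entries in marked columns (invariant (i) in \S\ref{sect:per-e}); so the fill-in, and hence the $\tilde O(n^{\omega/2})$ bound of Alon--Yuster, transfers to $\E_{4^d}$ at the cost of the tacit $\tilde O(d)$ arithmetic overhead. Once these two points are nailed down, the running-time bookkeeping is routine.
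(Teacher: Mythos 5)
Your proposal hits the right high-level themes---nested dissection in ND order, pivot-freeness by randomizing the weights, the use of $\tilde O(\log n)$-bit finite fields---but it departs from the paper's actual construction at the crucial point, and the arithmetic supporting your claimed running time is incorrect.

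The most serious problem is the claimed bound $\tilde O(n^{\omega/2+2}d)=\tilde O(n^{2+\frac{1}{2}})$. Since $\omega\geq 2$, we have $\omega/2+2\geq 3$, so this expression is at least $\tilde O(n^3)$, not $\tilde O(n^{2.5})$. The slippage comes from re-using Lemma~\ref{lem:simpair-per-det} to process each $M''_{i_1,i_2,\tau}$-branch by a polynomial determinant over $\F_{2^d}[r]$ followed by interpolation at $\Theta(n)$ points, which costs $\tilde O(n^{\omega/2+1}d)$ per branch, times $\tilde O(n)$ branches. The paper explicitly abandons the reverse-emulation subroutine from \S\ref{sect:per-e} in this regime. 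Instead, for each $M''_{i_1,i_2,\tau}$-subtree it continues a \emph{non-branching} pivot-free elimination that keeps rows $i_1$ and $i_2$ untouched (any further $M''$-branch would have two similar pairs and vanishes by Valiant's observation), first to upper-triangular (phase (a)) and then, in reverse ND order, to a ``diagonal-and-\#''-patterned leaf matrix (phase (b)); the permanent of such a leaf is then computed by a direct combinatorial subroutine in $\tilde O(n)$ time (phase (c)). Each subtree costs only $O(n^{3/2})$ ring operations by the Gilbert--Tarjan operation-count analysis (their Corollary~1), not $\tilde O(n^{\omega/2+1})$; with the fill bound of Theorem~\ref{thm:gilbert-tarjan} giving $O(\delta n\log n)$ branching row operations, the total is $\tilde O(n^{5/2})$.

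Two further points. First, the per-branch determinant estimate $\tilde O(n^{\omega/2}d)$ from Alon--Yuster does not apply here: the paper's bounded-genus algorithm is an iterative elimination scheme analyzed by Gilbert--Tarjan ($O(n^{3/2})$ operations for $n^{1/2}$-separable graphs), not a block-recursive fast-matrix-multiplication nested dissection, and the fill pattern after row operations is what must be controlled, not the original adjacency pattern. Second, your pivot-freeness argument via Schwartz--Zippel on all leading minors is less precise than what the paper uses and does not obviously give the needed union bound across the $\leq n^2$ matrices that arise on the branches. The paper instead observes that when a diagonal entry $\sigma_{i,i}$ is used as a pivot in phase (a), $\sigma_{i,i}=\bar\beta_{ii}+\eta$ where $\eta$ is a function only of the $\beta_{i'j'}$ with $(i',j')\in[i]\times[i]\setminus\{(i,i)\}$ and is in particular independent of $\beta_{ii}$; hence the pivot is even with probability exactly $2^{-d}$, and a union bound over the at most $n^3$ diagonal entries encountered gives the $O(n^{-1})$ failure probability. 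You also omit the nontrivial structural argument (a double induction over the ND separator tree) that each leaf matrix in phase (c) has only $O(n^{1/2})$ odd entries in rows $i_1,i_2$---an ingredient essential to the $\tilde O(n)$-time phase (c) subroutine, and one that simply does not arise in your version because you route around phase (c) through Lemma~\ref{lem:simpair-per-det}.
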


Here we use the central random matrix perturbation idea of Alon and Yuster 
(Lemma~2.4 in~\cite{AlonY2013}) but in a new way that will enable 
pivot-freeness with high probability. We start by defining pivot-freeness 
in our context.

\subsection{Pivot-free elimination and the fill}
\label{sect:pivot-free}

Let us recall the gist of the elimination procedures in \S\ref{sect:per-e}
and \S\ref{sect:det-e}. Namely, we start with an $n\times n$ matrix 
$M\in\E_{4^d}^{n\times n}$ of initially unmarked rows and columns, and use 
row operations \eqref{eq:per-row} and \ref{eq:det-row} to expand the marked 
rows and columns, 
while maintaining the invariant that each marked column has exactly one odd 
entry, and the submatrix of marked rows and marked columns has exactly one 
odd entry in each row. Essential to this expansion is the selection of an 
odd {\em pivot} entry $\sigma=\sigma_{i_1,j}$ at an unmarked column $j\in[n]$ 
and unmarked row $i_1\in[n]$, which is then used in the row 
operations relative to other rows $i_2\in[n]$ to eliminate odd entries in 
column $j$ via Lemma~\ref{lem:odd-elimination}, after which the column $j$ 
and the row $i_1$ are both marked.

We say that the matrix $M$ admits {\em pivot-free} elimination if, during
elimination as above, we can always choose the pivot $\sigma=\sigma_{i_1,j}$ 
to be a diagonal entry with $i_1=j$.
For example, a triangular matrix with a diagonal of odd entries admits 
pivot-free elimination. Let us say that the {\em fill} is the number of matrix 
entries that are made nonzero at any point of the elimination process. 

In what follows we tacitly work with a sparse representation of all 
the matrices considered, that is, we represent an $n\times n$ matrix as a
list of tuples $(i,j,\sigma_{i,j})$ for all the nonzero 
entries $\sigma_{i,j}\neq 0$ with $i,j\in[n]$; furthermore, we tacitly assume
the list is indexed with appropriate data structures supporting 
$O(\log n)$-time access to rows and columns.

\subsection{Separators and nested dissection to control the fill}

Crucial to controlling the fill for a given matrix $M$ is the order in 
which the diagonal entries are processed. 
We say that an {\em undirected} graph $G$ with vertex set $V(G)=[n]$
{\em supports} the matrix $M$ if for all $i,j\in[n]$ it holds that
the entry $\sigma_{i,j}$ of $M$ is nonzero only if $\{i,j\}\in E(G)$.
Since $V(G)=[n]$, we observe that any ordering of the diagonal
elements of $M$ defines a unique ordering of the vertices of $G$
and vice versa.

To study the fill, we use graph separators as defined by 
Lipton and Tarjan~\cite{LiptonT1979}. 
We say that a class $\mathscr{C}$ of 
{\em undirected} graphs satisfies an $f(n)$-{\em separator theorem} for 
a function $f$ and constants $c<1$, $c'>0$, $n_0\geq 0$ if for every 
$n$-vertex graph $G$ in $\mathscr{C}$ with $n>n_0$ there exists 
a partition $A\cup B\cup C = V(G)$ with
\[
|A|\leq cn\,,
\quad
|B|\leq cn\,,
\quad
|C|\leq c'f(n)\,,
\]
and no edge joins a vertex of $A$ with a vertex of $B$ in $G$.
In particular, graphs of bounded genus satisfy a $n^{1/2}$-separator theorem, 
and one can in $O(n\log n)$-time find a so-called weak separator tree for 
any bounded genus graph, see Alon and Yuster~\cite{AlonY2013}. 

Given the weak separator tree, Gilbert and Tarjan~\cite{GilbertT1987} present
their Algorithm ND that labels the vertices of the graph according to 
a post-order traversal of the separator tree in time $O(n)$ so that the 
cuts get higher labels than the subgraphs they split. This enables us to 
control the fill of $M$ by running their Algorithm ND on a graph 
supporting $M$, and working with respect to the vertex order produced 
by the algorithm when executing elimination on $M$. The total running time 
of this reordering is $O(n\log n)$, as the time-dominant operation is to 
compute the separator tree. 

Our focus here is on bounded-genus graphs, but we observe that we could use 
the technique for other so-called $\delta$-sparse hereditary families of 
graphs, including ones that take longer to obtain 
a weak separator tree for, again see~\cite{AlonY2013} for some examples. 
Central to the efficiency of the method is the following bound on the fill 
that in particular applies to bounded-genus graphs:

\begin{Thm}[Gilbert and Tarjan~{\cite[Theorem 2]{GilbertT1987}}]
\label{thm:gilbert-tarjan}
Let $\mathscr{C}$ be a class of graphs that satisfy a $n^{1/2}$-separator
theorem and is closed under contraction and subgraph. Suppose that no
$n$-vertex graph in $\mathscr{C}$ has more than $\delta n+O(1)$ edges.
If $G$ in $\mathscr{C}$ has $n>n_0$ vertices, the ND order causes
$O(\delta n\log n)$ fill.
\end{Thm}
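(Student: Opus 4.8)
The statement is a purely combinatorial fact about Gaussian elimination, and the plan is to retrace the nested-dissection analysis of Gilbert and Tarjan. First I would reduce to the reachability characterisation of fill: in a fixed elimination order, the position $\{u,v\}$ is made nonzero at some point exactly when $G$ contains a $u$--$v$ path all of whose internal vertices are eliminated before both $u$ and $v$ (this is the worst case over the ring, since elimination only ever creates nonzeros and never relies on cancellation). So it suffices to bound, in terms of $G$ and the ND vertex order alone, the number of such pairs, plus the $O(\delta n)$ original nonzeros. I would organise this count along the weak separator tree $T$ underlying Algorithm ND, recalling that $T$ has depth $O(\log n)$ because each child subgraph keeps at most a $c<1$ fraction of its parent's vertices.

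For a node $t$ of $T$, let $I_t$ be the vertex set of the subgraph handled at $t$, let $n_t=|I_t|$, let $C_t\subseteq I_t$ be the separator chosen at $t$, and let $\partial I_t$ be the set of vertices outside $I_t$ having a neighbour in $I_t$. Algorithm ND eliminates the two child subtrees of $t$ before $C_t$ and, globally, all of $I_t$ before any ancestor separator; hence, at the moment a vertex $v\in C_t$ is eliminated, its surviving neighbours all lie in $C_t\cup\partial I_t$, so the fill \emph{charged to $t$} (the entries first made nonzero while eliminating $C_t$) numbers at most $\binom{|C_t|+|\partial I_t|}{2}$. The whole argument thus hinges on the bound $|C_t|+|\partial I_t|=O(\sqrt{n_t})$. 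For $|C_t|$ this is the $n^{1/2}$-separator hypothesis. For $|\partial I_t|$ I would invoke the boundary-carrying refinement built into the weak separator tree of Alon and Yuster: when splitting the subgraph at $t$ one applies the \emph{weighted} separator theorem of Lipton and Tarjan to the induced subgraph on $I_t$ (which lies in $\mathscr{C}$ by closure under subgraph, the needed weightings being legitimate via closure under contraction) so that the separator also roughly bisects the current boundary; this yields a recurrence $b_{\mathrm{child}}\le\tfrac12 b_t+c'\sqrt{n_t}+O(1)$ for $b_t:=|\partial I_t|$ along any root-to-$t$ path, and since the subgraph sizes, hence the additive terms $c'\sqrt{n_{t'}}$, shrink geometrically down such a path, solving the recurrence gives $b_t=O(\sqrt{n_t})$.

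Granting $|C_t|+|\partial I_t|=O(\sqrt{n_t})$, the fill charged to $t$ is $O(n_t)$, and every fill entry is charged to exactly one node, namely the one whose separator contained the vertex that first made it nonzero; hence the total fill is $\sum_{t\in T}O(n_t)$. The sets $I_t$ over the nodes $t$ at any fixed depth of $T$ are pairwise disjoint (the two children of a node partition its non-separator vertices), so their sizes sum to at most $n$; summing over the $O(\log n)$ depths gives total fill $O(n\log n)$. Finally I would reinstate the edge-density parameter: for a $\delta$-sparse class the separator-theorem and edge-count constants scale with $\delta$, so the per-node bound reads $O(\delta n_t)$ and the total becomes $O(\delta n\log n)$; the base cases with $n\le n_0$ are of constant size and contribute $O(n)$ in aggregate, and the original $O(\delta n)$ nonzeros are absorbed.

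The step I expect to be the real obstacle is the boundary estimate $|\partial I_t|=O(\sqrt{n_t})$: without carrying the boundary along and bisecting it at every split it degrades to $O(\sqrt{n})$, which inflates the per-node charge to $O(n)$ and the total to $O(n^2)$. Making the refinement work requires a weighted separator theorem at each level and care that the minor- and subgraph-operations used to install the weights keep the instance in $\mathscr{C}$; once that is in place, the remainder is a routine charging argument over a depth-$O(\log n)$ tree.
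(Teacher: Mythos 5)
The paper does not prove this statement at all: it is a black-box citation to Gilbert and Tarjan~\cite{GilbertT1987}, invoked only to justify the ND-order fill bound used in \S\ref{sect:nested}. So there is no in-paper proof to compare against, and what you have written is a from-scratch reconstruction of the Gilbert--Tarjan analysis. Your skeleton is the standard one and is essentially right: the path/reachability characterisation of fill, charging each fill edge $\{u,v\}$ to the unique node $t$ whose separator $C_t$ contains the eliminating vertex, the observation that the surviving neighbours at that point lie in $C_t\cup\partial I_t$, the bound $\binom{|C_t|+|\partial I_t|}{2}=O(n_t)$ granted $|C_t|+|\partial I_t|=O(\sqrt{n_t})$, disjointness of the $I_t$ at a fixed depth, and the $O(\log n)$ depth coming from the geometric shrinkage $|A_j|\le cn_t$. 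You also correctly locate the crux: without the refined boundary bound, $|\partial I_t|$ degrades to $O(\sqrt n)$ and the total fill balloons to $\Theta(n^2)$.

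Where the sketch is thin is exactly the step you flag. The bound $|\partial I_t|=O(\sqrt{n_t})$ is not ``built in'' to the weak separator tree---it has to be arranged, by choosing each level's separator to roughly bisect the inherited boundary, and you do not actually derive the weighted separator theorem you appeal to. The derivation is standard but nontrivial: for a class closed under subgraph and satisfying an unweighted $n^{1/2}$-separator theorem, one obtains a weighted version by repeatedly separating and recursing into the heaviest piece, paying a convergent geometric sum $\sum_k c'(c^kn)^{1/2}=O(n^{1/2})$; closure under \emph{subgraph} is what this uses, and your invocation of closure under \emph{contraction} for ``installing weights'' is doing no identifiable work (contraction is used by Gilbert and Tarjan elsewhere, e.g.\ to keep the class's edge-density bound valid on the coarsened pieces). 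Finally, the $\delta$-dependence is handled by absorption rather than derivation: your charging argument yields $O(n\log n)$ fill plus $O(\delta n)$ original nonzeros, and you note this is $O(\delta n\log n)$, but you do not explain why Gilbert--Tarjan's constant genuinely scales with $\delta$ (it does, through the separator constant $c'$ and the weighted-separator overhead, which for a $\delta$-sparse class satisfying the hypotheses typically grow like $\sqrt\delta$). None of this is fatal---the outline is the right one and the identified obstacle is the correct one---but the boundary-bisection lemma is the theorem's real content and would need to be written out to make this a proof rather than a proof plan.
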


For our subsequent analysis of the branching elimination strategy that
we will pursue here, we will use the following slightly more precise structural
fact about Algorithm ND~\cite[Algorithm~2]{GilbertT1987} and 
the {\em ND order} it outputs: for an $n$-vertex undirected graph $G$ given 
as input with $n>n_0$, the top-level separator $C\subseteq V(G)$ satisfies
$|C|\leq c'n^{1/2}$ and splits the graph $G-C$ into $t$ connected components 
with vertex sets $A_1,A_2,\ldots,A_t\subseteq V(G)$ satisfying $|A_j|\leq cn$ 
for all $j=1,2,\ldots,t$; the algorithm then recurses on each of 
connected components $G[A_1],G[A_2],\ldots,G[A_t]$. The ND order output by the
algorithm satisfies $A_1<A_2<\cdots A_t<C$. In particular, vertices in $C$
are eliminated last.

\subsection{A randomized algorithm design}
\label{sect:rand-even-cycle}

We are now ready for our main algorithm design in this section. 
Again it is convenient to first set up the algebraic context for the 
algorithm and only then give the algorithm in detail.
We will postpone the description and analysis of the 
fine-grained elimination subroutine to the next subsection.

To set the context, let $G$ be an $n$-vertex simple directed graph with 
a loop at every vertex.
Our task is to decide whether $G$ has an even cycle.
Recall from Lemma~\ref{lem:even-cycle-existence} that 
$\pcc_{n-1}A$ is not identically zero if and only if $G$ has an even cycle.
Our algorithm witnesses that $\pcc_{n-1}A$ is not identically zero
with high probability by applying squarefree randomized polynomial identity 
testing (Lemma~\ref{lem:squarefree-pit}) to the polynomial 
$p=\pcc_{n-1}A\in\F_{2^d}[w_{uv}:uv\in E(G)]$. That is, we choose the ring $R$ 
in Lemma~\ref{lem:even-cycle-existence} to be the polynomial ring 
$\F_{2^d}[w_{uv}:uv\in E(G)]$, and choose the arc weights 
in \eqref{eq:adj-mat} to equal the indeterminates $w_{uv}$ of this 
polynomial ring. With these choices, $\pcc_{n-1}A$ is not 
identically zero if and only if it is a nonzero polynomial, 
so Lemma~\ref{lem:squarefree-pit} applies.

Let us now present the algorithm in detail. 
Let the given input be an $n$-vertex simple directed graph $G$ with 
a loop at every vertex. Suppose that the undirected graph underlying $G$
belongs to a graph class $\mathscr{C}$ that satisfies the assumptions of
Theorem~\ref{thm:gilbert-tarjan}. In particular, this applies to a 
graph of bounded genus; such graphs have bounded average degree $\delta$ 
(see e.g.~\cite{GilbertT1987}), which we will apply tacitly in what follows.
We may assume $n\geq 2$; indeed, otherwise $G$ 
has no even cycle. The algorithm tacitly relies on the standard 
algorithmic toolbox for univariate polynomials over a black-box 
ring to enable $\tilde O(d)$-time arithmetic operations 
in $\F_{2^d}$ (cf.~\S\ref{sect:poly}). 
\begin{itemize}
\item[(D1)]
Set $d\leftarrow 4\lceil\log_2 n\rceil$.
\item[(D2)]
For each arc $uv\in E(G)$ independently, 
draw a uniform random value $\beta_{uv}\in\F_{2^d}$. 
\item[(D3)]
Construct the matrix $A(\beta)\in\F_{2^d}^{n\times n}$ whose entry 
at each row $u\in V(G)$ and each column $v\in V(G)$ is defined by
\[
\bigl(A(\beta)\bigr)_{u,v}=
\begin{cases}
\beta_{uv}            & \text{if $uv\in E(G)$};\\
0                     & \text{otherwise}.
\end{cases}
\]
[Observe that we get $A(\beta)$ by assigning 
$w_{uv}\leftarrow\beta_{uv}$ for all $uv\in E(G)$ in \eqref{eq:adj-mat}.]
\item[(D4)]
Use the algorithm in Theorem~\ref{thm:gilbert-tarjan} on the undirected
graph underlying $G$ to compute an order for diagonal elements of 
$A(\beta)\in\F_{2^d}^{n\times n}$.\\{}
[Observe that the underlying undirected graph of $G$ supports $A(\beta)$.
This step takes time $O(n\log n)$.]
\item[(D5)]
Compute $\epsilon\leftarrow\pcc_{n-1} A(\beta)\in\F_{2^d}$ 
using \eqref{eq:pcc-to-per-det} 
and pivot-free elimination in the (D4) order to evaluate
$\per\overline{A(\beta)}$ and $\det\overline{A(\beta)}$.\\{}
[We postpone a detailed description and analysis of this subroutine to 
the next subsection. Here we will be content with observing that the failure
probability is at most $O(n^{-1})$ and the running time 
is $\tilde O(n^{2+\frac{1}{2}})$.]
\item[(D6)]
If $\epsilon\neq 0$, assert that $G$ contains an even cycle;
if $\epsilon=0$, assert that $G$ has no even cycle.\\{}
[To analyse correctness, observe that when $G$ has no even cycle, 
we have that $\pcc_{n-1} A$ is the zero polynomial by 
Lemma~\ref{lem:even-cycle-existence}, and hence $\pcc_{n-1} A(\beta)=0$
for all choices in (D2). Thus $\epsilon=0$ with probability at 
least $1-O(n^{-1})$ by the failure analysis in (D5). When $G$ has an even 
cycle, we have that $\pcc_{n-1} A$ is a nonzero polynomial of degree 
at most $n$ by Lemma~\ref{lem:even-cycle-existence} and \eqref{eq:pcc}. Thus, 
from Lemma~\ref{lem:squarefree-pit} we have that the 
probability for $\pcc_{n-1} A(\beta)=0$ is at most 
$1-\bigl(1-2^{-d}\bigr)^n\geq 1-\bigl(1-n^{-4}\bigr)^n=O(n^{-3})$.
Thus, by the union bound with the failure analysis in (D5), we have 
that $\epsilon\neq 0$ with probability at least $1-O(n^{-1})$.]
\end{itemize}

We observe that the running time is dominated by (D5), which runs in 
time $\tilde O(n^{2+\frac{1}{2}})$. 

By using the interpolation idea from Algorithm S in~\S\ref{sec:secalg}, 
and again using Lemma~\ref{lem:shortest-even-cycle} instead 
of Lemma~\ref{lem:even-cycle-existence}, we can solve for the length of 
a shortest even cycle in time $\tilde O(n^{3+\frac{1}{2}})$ in graphs 
of bounded genus. In more detail, 
\begin{enumerate}
\item[(i)] 
we replace step (D1) with (S1) but also require 
the $\gamma$-values to be non-zero,
\item[(ii)] 
we insert a loop for $\ell=0,1,\ldots,n$ as in (S3) immediately 
after step (D2),
\item[(iii)] we replace the diagonal entries of the matrix 
to $\bigl(A_{\gamma_\ell}(\beta)\bigr)_{u,u}=\gamma_\ell \beta_{uu}$ in 
step (D3),
\item[(iv)] 
we compute 
$\epsilon_\ell\leftarrow\pcc_{n-1} A_{\gamma_\ell}(\beta)\in\F_{2^d}$ in (D5), 
and
\item[(v)] 
we replace (D6) for steps (S4) and (S5) after exchanging $\delta_\ell$ 
by $\epsilon_\ell$.
\end{enumerate}
This completes the proof of Theorem~\ref{thm:boundedgenus}, pending the
detailed development of Step (D5) in the next section.

\subsection{Branching elimination over $\E_{4^d}$ in ND order}

This section develops a fine-grained elimination procedure for computing
$\per M$ and $\det M$ for a given matrix $M\in\E_{4^d}^{n\times n}$
supported by an undirected graph $G$ of bounded genus with $V(G)=[n]$. 
For convenience, we assume that both the matrix $M$ and the graph $G$ have 
been permuted to the ND order given by (D4); more precisely, we assume that 
the ND order for $M$ and $G$ is the natural numerical ordering 
$1,2,\ldots,n$ of $[n]$, where $1$ is eliminated first and $n$ last.
For $i=1,2,\ldots,n$, we always eliminate with the pivot-free diagonal 
choice $\sigma=\sigma_{i,i}$ (cf.~\S\ref{sect:pivot-free} 
and \S\ref{sect:per-e}), which we will show in what follows is odd 
for all the choices on all the branches considered with high probability. 

We focus on computing the permanent in what follows, with the understanding
that the determinant can be obtained with an analogous but simpler elimination
strategy since the determinant vanishes on the $M_{i_1,i_2,\tau}''$-branches;
recall~\eqref{eq:det-row} and \eqref{eq:per-row}.

The key technical difference to our earlier design in \S\ref{sect:per-e} is
that we do not use a dedicated reverse-emulation subroutine to process the 
$M_{i_1,i_2,\tau}''$-branches as in \S\ref{sect:per-e}, but rather follow 
Valiant's strategy~\cite{Valiant1979} and work on these branches essentially 
recursively, but crucially leaving rows $i_1$ and $i_2$ intact; that is, 
throughout the processing of a $M_{i_1,i_2,\tau}''$-subtree, we have 
that row $i_2$ equals $\tau$ times row $i_1$. Thus, any row operation
on distinct rows $i_1',i_2'\in [n]\setminus\{i_1,i_2\}$ of a matrix $M$ 
in such a subtree has the property that the permanent 
of the $M_{i_1',i_2',\tau'}''$-branch vanishes in $\E_{4^d}$.%
\footnote{Indeed, in such a matrix $M_{i_1',i_2',\tau'}''$ we 
have that $i_2$ is similar to $i_1$, and $i_2'$ is similar to $i_1'$, 
so the permanent vanishes in characteristic $4$ by Valiant's
observation \cite{Valiant1979}, cf.~\S\ref{sect:per-e}.}{}
Thus, each $M_{i_1,i_2,\tau}''$-subtree is in effect a non-branching 
elimination that avoids the rows $i_1$ and $i_2$. 

Another technical difference---which is crucial to gain from the ND order and 
for compatibility with the Gilbert--Tarjan analysis~\cite{GilbertT1987}---is
that we run elimination 
\begin{itemize}
\item[(a)]
in the ND order
$1,2,\ldots,n$ (omitting $i_1$ and $i_2$ from the order 
in each $M_{i_1,i_2,\tau}''$-subtree) and restricted to $i_2>i_1$ 
(respectively, $i_2'>i_1'$ with $i_1',i_2'\in [n]\setminus\{i_1,i_2\}$
for each $M_{i_1,i_2,\tau}''$-subtree) 
to obtain an intermediate matrix with odd 
entries, if any, only in 
the upper triangle;  
\item[(b)] then further eliminate the intermediate matrix in the reverse 
ND order $n,n-1,\ldots,1$ (omitting $i_1$ and $i_2$ from the order in each 
$M_{i_1,i_2,\tau}''$-subtree) and now with $i_2<i_1$ 
(respectively, $i_2'<i_1'$ with $i_1',i_2'\in [n]\setminus\{i_1,i_2\}$
for each $M_{i_1,i_2,\tau}''$-subtree) to obtain 
a leaf matrix with odd entries, if any, only on the diagonal 
(respectively, only on the diagonal as well as 
rows $i_1,i_2$ as well as columns $i_1,i_2$---a good way to visualize 
this allowed pattern of odd entries, if any, is to take a ``\#''-pattern 
and insert the diagonal); and
\item[(c)] 
computing the permanent of the diagonal matrix as the product of 
its diagonal entries (respectively, using a dedicated subroutine---described
in what follows---to compute the permanent of a ``diagonal-and-\#''-patterned 
matrix with row $i_2$ similar to row $i_1$).
\end{itemize}
Slightly less precisely, in (a) we essentially follow standard Gaussian 
elimination with diagonal pivoting to reduce to an upper-triangular matrix,
then in (b) we reduce the upper-triangular matrix to a diagonal matrix, 
at which point (c) the permanent is a product of diagonal entries---whenever
we apply a branching row operation \eqref{eq:per-row} on rows $i_1$ and $i_2$, 
we apply a similar Gaussian elimination strategy to the matrix in the 
$M_{i_1,i_2,\tau}''$-branch, but we do not touch the rows $i_1$ and $i_2$;
accordingly, the reduced matrix is ``diagonal-and-\#''-patterned rather than
diagonal, and we resort to a dedicated subroutine for computing its permanent.

Before analysing the running time of the elimination phases (a) and (b), 
as well as completing the permanent subroutine for (c), let us analyse 
the failure probability of the elimination procedure in terms of the random 
choices of the values $\beta_{uv}\in\F_{2^d}$ in (D2). Indeed, we observe that 
pivot-free elimination fails when a diagonal element $\sigma_{i,i}\in\E_{4^d}$ 
is not odd and we are applying $\sigma_{i,i}$ in a row operation 
with $i_1=i$ (respectively, $i_1'=i$) during (a). Furthermore, such a failure 
can occur only during phase (a) because phases (b) and (c) do not modify 
diagonal elements from the values they stabilise to in phase (a).
By the structure of phase (a), we observe that 
$\sigma_{i,i}=\bar\beta_{ii}+\eta$, where $\eta$ is an expression that
depends on the choices of $\beta_{i'j'}$ for ND-order-relabeled input graph 
arcs $(i',j')\in [i]\times [i]\setminus\{(i,i)\}$, but in particular $\eta$ 
is independent of $\beta_{ii}$. For any fixed $\eta\in\E_{4^d}$, we thus have 
that $\bar\beta_{ii}+\eta$ is even with probability $2^{-d}$.
By the union bound on the $n$ diagonal elements in each of the matrices
considered, of which there are at most $1+n(n-1)\leq n^2$---namely the original 
input matrix and the matrices created by the at most $n(n-1)$ branching 
row operations when reducing the input matrix---we have that the 
probability that the elimination procedure fails is at most $2^{-d}n^3$.
By our choice of $d$ in (D1), this is at most $O(n^{-1})$.

Let us now proceed to analyse the running time of phases (a) and (b). 
First, we observe that phase (a) falls under the 
Gilbert--Tarjan~\cite{GilbertT1987} analysis of Gaussian elimination
(or more precisely, odd elimination in our case, 
cf.~Lemma~\ref{lem:odd-elimination}) to triangular form in ND order. 
In particular, since Theorem~\ref{thm:gilbert-tarjan} applies to bounded 
genus graphs, we observe that the fill for each matrix considered in phase (a) 
is at most $O(\delta n\log n)$ by Theorem~\ref{thm:gilbert-tarjan}. Thus, we 
can improve the earlier upper bound on the number of branching row operations 
from $n(n-1)$ to $O(\delta n\log n)$ since each element of the fill is 
associated with at most one row operation. Accordingly, the number of 
$M_{i_1,i_2,\tau}''$-subtrees considered in phase (a) is at most 
$O(\delta n\log n)$. Processing one such $M_{i_1,i_2,\tau}''$-subtree 
in phase (a) leads to $O(n^{3/2})$ arithmetic operations in $\E_{4^d}$; 
indeed, this follows by the Gilbert--Tarjan operation-count analysis 
for bounded genus graphs in \cite[Corollary~1]{GilbertT1987} and the fact 
that each operation in the Gilbert--Tarjan analysis translates to at most 
$O(1)$ operations in our case---namely, the original operation as well as 
operations on entries of {\em columns} $i_1$ and $i_2$ that must be 
maintained under elimination since {\em rows} $i_1$ and $i_2$ are not touched. 
Since the same analysis bounds the total number of $\E_{4^d}$-arithmetic 
operations done on the branching row operations, we have that the total 
number of $\E_{4^d}$-arithmetic operations in phase (a) is 
$O(\delta n^{\frac{5}{2}}\log n)$. Due to the upper-triangular 
structure in phase (b), and the fill at most $O(\delta n\log n)$ for 
each of the at most $O(\delta n\log n)$ matrices considered in phase (b), 
we have that the total number of $\E_{4^d}$-arithmetic operations in phase (b)
is at most $O((\delta n\log n)^2)$. Thus, since $d=O(\log n)$, and $\delta$ 
is a constant for bounded genus graphs~(cf.~\cite{GilbertT1987}), 
phases (a) and (b) run in time $\tilde O(n^{2+\frac{1}{2}})$ for
bounded genus graphs.

It remains to complete phase (c) and analyse its running time. 
Let $L\in\E_{4^d}^{n\times n}$ be a matrix that is odd at the diagonal 
and may have odd entries at rows $i_1,i_2$ as well as at columns $i_1,i_2$
for distinct $i_1,i_2\in[n]$. Let us write $\sigma_{i,j}$ for the entry 
of $L$ at row $i\in [n]$, column $j\in [n]$. First, suppose that $L$ is odd 
only at the diagonal. 
Then, $\per L$ is the product of the diagonal entries---indeed, 
consider an arbitrary permutation $f\in S_n$ in~\eqref{eq:per}, and observe 
that either $f$ is the identity permutation
or $f$ moves at least two points; since only the diagonal is odd, the 
latter case translates to a monomial 
$\sigma_{1,f(1)}\sigma_{2,f(2)}\cdots \sigma_{n,f(n)}$
in~\eqref{eq:per} with at least two even terms, which vanishes in $\E_{4^d}$. 
Next, suppose that $L$ has at most $O(n^{1/2})$ odd entries in rows $i_1$
and $i_2$, and furthermore that row $i_2$ is similar to row $i_1$ in $L$.
We first show that in this case it suffices to consider permutations
$f\in S_n$ that touch only odd entries of $L$. Consider an arbitrary 
permutation $f\in S_n$. Suppose that there exists an $i\in [n]$ such that 
$\sigma_{i,f(i)}$ is even. Recall that in $\E_{4^d}$ the result 
of a multiplication with at least one even operand is even. Construct the 
permutation $f':[n]\rightarrow[n]$ as in \eqref{eq:f-prime}. Since 
row $i_2$ is similar to row $i_1$, we have that $\sigma_{i,f'(i)}$ is even,
and, furthermore, \eqref{eq:sim-monomial} holds by the reasoning in the
proof of Lemma~\ref{lem:simpair-per-det}. Thus, since the product of two
even elements vanishes in $\E_{4^d}$, we conclude that
\[
\sigma_{1,f(1)}\sigma_{2,f(2)}\cdots \sigma_{n,f(n)}+
\sigma_{1,f'(1)}\sigma_{2,f'(2)}\cdots \sigma_{n,f'(n)}=
2\sigma_{1,f(1)}\sigma_{2,f(2)}\cdots \sigma_{n,f(n)}=0\,,
\]
and hence only the permutations $f\in S_n$ that touch only odd entries
of $L$ have monomials that give a potentially nonzero contribution to $\per L$.
Thus, to compute $\per L$ it suffices to iterate over such permutations 
$f\in S_n$ and sum the contributions of their monomials 
$\sigma_{1,f(1)}\sigma_{2,f(2)}\cdots \sigma_{n,f(n)}$.
We iterate over such $f\in S_n$ by first considering all possible
images $f(i_1)=j_1$ and $f(i_2)=j_2$ such that both 
$\sigma_{i_1,j_2}$ and $\sigma_{i_2,j_2}$ are odd. By our assumption
on $L$, there are at most $O(n)$ such choices; furthermore, for
each such choice, we must have $f(j_1)\in\{i_1,i_2\}$ and 
$f(j_2)\in\{i_1,i_2\}$ or otherwise an even element is touched;
choosing $f(j_1)$ and $f(j_2)$ accordingly (unless not already chosen), 
we must have $f(i)=i$ for all elements $i\in [n]$ whose image
is not yet fixed. This leads to at most $O(n)$ permutations $f\in S_n$ 
to be iterated over. By preprocessing the products of diagonal elements
of $L$ into a perfect binary tree of subproducts (each internal node
is the product of its child nodes; the leaves are the diagonal elements,
padded with $1$-elements to get the least power of two at least $n$), we
can compute the monomial of each $f$ in the iteration in $O(\log n)$
arithmetic operations in $\E_{4^d}$. Thus, since $d=O(\log n)$, for a 
given $L$ meeting our assumptions we can compute $\per L$ in 
time $\tilde O(n)$. Taken over all the $O(\delta n\log n)$ matrices arriving
to phase (c) from phases (a) and (b), 
this translates to $\tilde O(n^2)$ total time for phase (c).

It remains to justify our assumption that each matrix $L$ with row
$i_2$ similar to row $i_1$ arriving from phases (a) and (b) to phase (c)
has the property that both row $i_1$ and row $i_2$ have at most $O(n^{1/2})$
odd entries. Since row $i_2$ by definition equals some coefficient 
times row $i_1$, it suffices to show this for row $i_1$. For phase (b), 
row $i_1$ has exactly one odd entry since we are eliminating an upper
triangular matrix to a diagonal matrix in order $n,n-1,...,1$. 
For phase (a), let us recall the recursive structure of the ND order 
reviewed after Theorem~\ref{thm:gilbert-tarjan}. Namely, for $n>n_0$,
at top level of recursion we have a partition of $[n]$ into 
sets $A_1,A_2,\ldots,A_t,C\subseteq [n]$ with $A_1<A_2<\cdots<A_t<C$ such 
that the input matrix $M$ (relabeled to ND order as per our assumption) has 
the ``diagonal-and-hook'' block structure
\begin{equation}
\label{eq:nd-diag-hook}
\begin{array}{c|ccccc}
    & A_1 & A_2 & \cdots & A_t & C\\ \hline
A_1 &  O  &     &        &     & O \\
A_2 &     &  O  &        &     & O \\
\vdots &  &     & \ddots &     & \vdots \\
A_t &     &     &        &  O  & O \\
C   &  O  &  O  & \cdots &  O  & O
\end{array}\,,
\end{equation}
where the symbol ``$O$'' indicates blocks that may contain odd entries, 
all other blocks are even. This structure is then further refined by 
recursing into each $A_j$ for $j=1,2,\ldots,t$ to obtain a tree 
of separators with $C$ at the root. Let us prove by double induction on the 
height of this tree and the size parameter $n$ of $M$ that whenever 
a row operation $(i_1,i_2,\tau)$ executed in phase (a), the row $i_1$ has at 
most $c''n^{1/2}$ odd entries for a constant $c''>0$ to be selected. 
For the base case, a tree of height one has $1\leq n\leq n_0$, so the base
case holds if $c''\geq n_0^{1/2}$. So suppose the claim holds for trees
of height $h\geq 1$, and consider a tree of height $h+1$. We may assume that
$n>n_0$; otherwise the reasoning in the base case applies. Thus, $M$ has 
the structure \eqref{eq:nd-diag-hook} with $|A_j|\leq cn$ and 
$|C|\leq c'n^{1/2}$. Recall the structure of the elimination in phase (a).
First, suppose that $i_1\in A_j$ for some $j=1,2,\ldots,t$. Applying
the induction hypothesis to the subtree of $A_j$ with height at most 
$h$ and the size parameter $|A_j|\leq cn$, we conclude that row $i_1$ has 
at most $c''|A_j|^{1/2}+|C|\leq c''(cn)^{1/2}+c'n^{1/2}\leq c''n^{1/2}$ 
odd entries if $c''\geq c'/(1-c^{1/2})$; here the term 
$c''|A_j|^{1/2}$ comes from recursive elimination inside 
the $(A_j,A_j)$-block in \eqref{eq:nd-diag-hook}, and the term
$|C|$ comes from the $C$-column in \eqref{eq:nd-diag-hook}.
Second, suppose that $i_1\in C$. At this point in elimination, 
each $(A_j,C)$-block has become even, so we have that row $i_1$ 
has at most $|C|\leq c'n^{1/2}\leq c''n^{1/2}$ odd entries if
$c''\geq c'$. We conclude that the claim holds when we take
$c''=\max\bigl(n_0^{1/2},c'/(1-c^{1/2})\bigr)$. This completes the
description and analysis of the branching elimination procedure
for the permanent over $\E_{4^d}$ in ND order; that is, 
the subroutine in (D5) of \S\ref{sect:rand-even-cycle}.

\section{Shortest two disjoint paths in undirected graphs}

\label{sect:s2dp}

This section establishes the following corollary of the present techniques
when combined with techniques of Bj\"orklund and Husfeldt~\cite{BjorklundH19}
for the shortest two disjoint paths problem.

\begin{Thm}[Shortest two disjoint paths]%
  \label{thm:s2dp}
  Given as input an undirected, unweighted, $n$-vertex graph $G$ together with terminals $s_1,t_1,s_2,t_2\in V(G)$, there is an algorithm with running time $\tilde O(n^{3+\omega})$ that with probability $1-O(n^{-1})$ determines the shortest total length of any pair of vertex-disjoint paths $P_1$ and $P_2$ in $G$
with $s_1,t_1\in V(P_1)$ and $s_2,t_2\in V(P_2)$.
\end{Thm}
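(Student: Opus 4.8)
The plan is to marry the algebraic characterisation of the shortest two disjoint paths problem due to Bj\"orklund and Husfeldt~\cite{BjorklundH19} with the fast characteristic-four machinery of \S\ref{sec: parity cycle cover enumeration} and the interpolate-then-identity-test template of Algorithm~S in \S\ref{sec:secalg}. From $G$ and the terminals $s_1,t_1,s_2,t_2$, the construction of~\cite{BjorklundH19} produces an auxiliary graph $H$ on $n+O(1)$ vertices carrying an indeterminate $w_e$ for each edge $e$ of $H$ together with a single length-marking indeterminate $y$, and (after the standard reduction performed there) a polynomial identity of the form $2T=q$ valid over every commutative ring of characteristic four. Here $T\in\Z[w_e:e\in E(H)][y]$ has $\{0,1\}$ coefficients, and its coefficient at each power of $y$ collects the edge-indicator monomials of the vertex-disjoint path pairs $P_1,P_2$ of the corresponding total length with $s_1,t_1\in V(P_1)$ and $s_2,t_2\in V(P_2)$, while $q$ is a fixed combination of the permanent and the determinant of a weighted adjacency matrix $A_y$ of $H$. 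The analogue of Lemma~\ref{lem:shortest-even-cycle} established in~\cite{BjorklundH19} then reads off the shortest total length of such a pair from the coefficients of $T$ in $y$ (and detects infeasibility precisely when $T$ is identically zero).

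The first step is to recast this identity over the ring $\E_{4^d}$ of \S\ref{sec: parity cycle cover enumeration} in place of the degree-truncated ring $\Z_4[x]/\bra x^d\ket$ of~\cite{BjorklundH19}. Since $T$ has $\{0,1\}$ coefficients and $2T=q$ holds over $\E_{4^d}[w_e:e\in E(H)][y]$, Lemma~\ref{lem:emulation} applies exactly as in the derivation of~\eqref{eq:pcc-to-per-det}: it gives that $2\,\overline{T}$ equals the corresponding permanent--determinant combination of the entrywise lift $\bar A_y$ over $\E_{4^d}$, so evaluating $T$ over $\F_{2^d}$ reduces to computing that permanent and determinant over $\E_{4^d}$, assembling $q$, and inverting the lift-times-two operation. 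The payoff of this substitution is a clean separation of duties: the coefficient ring is now only asked to supply characteristic-four cancellation---all the identity of~\cite{BjorklundH19} needs of it---whereas the length/degree bookkeeping that forced $d$ to be polynomial in $n$ there is relegated entirely to the single extra indeterminate $y$ and handled by interpolation, so $d=\Theta(\log n)$ now suffices.

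With this in hand the algorithm mirrors Algorithm~S of \S\ref{sec:secalg}. Take $d=\Theta(\log n)$ (concretely $d=5\lceil\log_2 n\rceil$, as in Algorithm~S), draw independent uniform $\beta_e\in\F_{2^d}$ for the edge indeterminates, fix $O(n)$ distinct nonzero points $\gamma_\ell\in\F_{2^d}$, and for each $\ell$ evaluate $T(\gamma_\ell)(\beta)\in\F_{2^d}$ by computing the permanent and determinant of $\bar A_{\gamma_\ell}(\beta)$ over $\E_{4^d}$ via Lemmas~\ref{lem:per-e} and~\ref{lem:det-e}, assembling $q$, and inverting lift-times-two; then interpolate in $y$ to recover $T(y)(\beta)\in\F_{2^d}[y]$ and return the shortest total length extracted from $T(y)(\beta)$ in the manner of Lemma~\ref{lem:shortest-even-cycle}, or report infeasibility if $T(y)(\beta)=0$. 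As $H$ has $n+O(1)$ vertices, each of the $O(n)$ evaluations costs $\tilde O(n^{2+\omega}d)=\tilde O(n^{2+\omega})$, for a total of $\tilde O(n^{3+\omega})$, matching the bound of Theorem~\ref{thm:main}. For correctness, a union bound over the $O(n)$ invocations of Lemma~\ref{lem:per-e} bounds the total computational failure probability by $O(n^{-1})$; conditioned on no such failure, if a disjoint path pair of total length $k$ exists then the matching coefficient of $T$ is a nonzero polynomial in the $w_e$ of degree $O(n)$, so Lemma~\ref{lem:squarefree-pit}---or the ordinary DeMillo--Lipton--Schwartz--Zippel bound, if modelling an undirected edge by a pair of opposite arcs sharing one indeterminate makes that polynomial have degree two, which is equally harmless for $|\F_{2^d}|=2^{\Theta(\log n)}$---makes the random evaluation nonzero with probability $1-O(n^{-1})$.

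I expect the crux to be the recasting in the second step: one must check that the characterisation of~\cite{BjorklundH19} genuinely takes the form of a $\{0,1\}$-coefficient-restricted identity $2T=q$ over an arbitrary characteristic-four ring, and in particular that the $x$-adic truncation of $\Z_4[x]/\bra x^d\ket$ used there was only a bookkeeping device---faithfully reproduced here by the separate indeterminate $y$ together with interpolation through $O(n)$ points---rather than something entangled with the cancellation argument that discards the non-disjoint and non-simple configurations. Once this decoupling is verified, the remaining ingredients are routine: $H$ has $n+O(1)$ vertices, so the permanent and determinant computations retain the $\tilde O(n^{2+\omega}d)$ cost of Lemmas~\ref{lem:per-e} and~\ref{lem:det-e}; the polynomial submitted to identity testing has the low-degree (essentially squarefree) structure required by Lemma~\ref{lem:squarefree-pit}; and the overall correctness and running-time analysis then follows the proof of Theorem~\ref{thm:main} line by line.
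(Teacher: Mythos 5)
Your proposal takes essentially the same route as the paper: start from the Bj\"orklund--Husfeldt disjoint-paths enumerator and its characteristic-four identity, swap the degree-truncated ring $\Z_4[x]/\bra x^d\ket$ for $\E_{4^d}$ so that $d=\Theta(\log n)$ suffices, emulate via Lemma~\ref{lem:emulation}, evaluate the right-hand side over $\E_{4^d}$ at $O(n)$ interpolation points for the indeterminate $y$ that tracks path length, invert the lift-times-two map, and finish with polynomial identity testing exactly as in Algorithm~S of \S\ref{sec:secalg}. One detail in your description is off: the identity actually used, \eqref{eq:dp}, expresses $2\operatorname{dp}A$ as a signed sum of \emph{three permanents}, namely $\per A[t_1s_1,t_2s_2]+\per A[t_1s_1,s_2t_2]-\per A[s_1s_2,t_1t_2]$, rather than a permanent-minus-determinant combination as in~\eqref{eq:two-pcc}; this is harmless---each permanent is handled by Lemma~\ref{lem:per-e} in $\tilde O(n^{2+\omega}d)$ time so the running-time and correctness analyses go through unchanged---but the form of $q$ should be corrected.
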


We sketch the proof based on the constructions of Bj\"orklund and Husfeldt~\cite{BjorklundH19}. Without loss of generality we can assume that the undirected graph $G$ has a loop at every vertex. Let us work over a ring $R$ and associate a weight $w_{\{u,v\}}\in R$ with every edge $\{u,v\}\in E(G)$. Define the weighted \emph{symmetric} adjacency matrix $A$
such that the entry at row $u\in V(G)$ and column $v\in V(G)$ is defined by
\[
  A_{u,v} = A_{v,u} =
  \begin{cases}
    w_{\{u\}}   & \text{if $u=v$};\\
    w_{\{u,v\}} & \text{if $u\neq v$ and $\{u,v\}\in E(G)$};\\
    0 & \text{otherwise}.
  \end{cases}
\]
For a subset $U\subseteq V(G)$ of vertices, let us write $A_U$ for the matrix 
obtained from $A$ by deleting the rows and columns corresponding to $U$.
Define the \emph{disjoint paths enumerator} of $A$ as 
\[
  \operatorname{dp} A = 
  \sum_{P_1,P_2}
  \left(\prod_{uv\in P_1\cup P_2} A_{u,v}\right)
  \per A_{V(P_1)\cup V(P_2)}\,,
\]
where the sum is over all vertex disjoint paths $P_1$ and $P_2$ in $G$
with $s_1,t_1\in V(P_1)$ and $s_2,t_2\in V(P_2)$.

To obtain an algebraic fingerprint, 
follow the analog of~\eqref{eq:adj-mat-y} and 
extend $A$ to a matrix $A_y\in R[y]^{n\times n}$ in the indeterminate $y$ 
by multiplying the {\em non-loops} with $y$.
To be concrete,
\[
  (A_y)_{u,v} =
  (A_y)_{v,u} =
  \begin{cases} 
    w_{\{u\}} & \text{if $u=v$};\\
    yw_{\{u,v\}} & \text{if $u\neq v$ and $\{u,v\}\in E(G)$};\\
    0 & \text{otherwise}.
  \end{cases}
\]
The reasoning behind Theorem~1.1 of \cite{BjorklundH19} then establishes 
that $G$ contains a unique pair of disjoint paths of total 
length $k$ if and only if $[y^k]\operatorname{dp} A_y$ is the lowest-order 
term of $\operatorname{dp} A_y$, viewed as a polynomial in $y$, that 
is not identically zero.

Defining $A[vw, v'w']$ as in \cite[Equation~(1.2)]{BjorklundH19}, the central 
characterisation in \cite[Lemma~2.1]{BjorklundH19} 
with $f = 2\operatorname{dp}$  becomes
\begin{equation}\label{eq:dp}
  2\operatorname{dp} A  =  \per A[t_1s_1, t_2s_2] + \per A[t_1s_1, s_2t_2] - \per A[s_1s_2, t_1t_2] 
\,.
\end{equation}
This expression, like \eqref{eq:two-pcc}, is a multivariate polynomial 
identity of the form $2p=q$, so the same approach as in the present paper 
works. In particular, the disjoint paths enumerator $\operatorname{dp}$ can 
be evaluated much like the parity cycle cover enumerator $\pcc_{n-1}$, 
as described in \S\ref{sect:pcc-ff}.

\medskip

The improvements to the running time in \S\ref{sect:nested} apply here 
as well. In particular, the running time for bounded genus instances 
becomes $\tilde O(n^{3+\frac{1}{2}})$ as in Theorem~\ref{thm:boundedgenus}.
In particular, 
we observe that the two directed edges added to the three graphs underlying 
the matrices in \eqref{eq:dp} increase the genus by at most $2$.


\bibliographystyle{abbrv}
\bibliography{paper}

\begin{thebibliography}{10}

\bibitem{AlonY2013}
N.~Alon and R.~Yuster.
\newblock Matrix sparsification and nested dissection over arbitrary fields.
\newblock {\em J. {ACM}}, 60(4):25:1--25:18, 2013.

\bibitem{ArkinPY91}
E.~M. Arkin, C.~H. Papadimitriou, and M.~Yannakakis.
\newblock Modularity of cycles and paths in graphs.
\newblock {\em J. ACM}, 38(2):255–274, Apr. 1991.

\bibitem{Berkowitz1984}
S.~J. Berkowitz.
\newblock On computing the determinant in small parallel time using a small
  number of processors.
\newblock {\em Inform. Process. Lett.}, 18(3):147--150, 1984.

\bibitem{BjorklundH19}
A.~Bj{\"{o}}rklund and T.~Husfeldt.
\newblock Shortest two disjoint paths in polynomial time.
\newblock {\em {SIAM} J. Comput.}, 48(6):1698--1710, 2019.

\bibitem{BunchH1974}
J.~R. Bunch and J.~E. Hopcroft.
\newblock Triangular factorization and inversion by fast matrix multiplication.
\newblock {\em Math. Comp.}, 28:231--236, 1974.

\bibitem{ChungGK94}
F.~R.~K. Chung, W.~Goddard, and D.~J. Kleitman.
\newblock Even cycles in directed graphs.
\newblock {\em {SIAM} J. Discret. Math.}, 7(3):474--483, 1994.

\bibitem{DeMilloL1978}
R.~A. DeMillo and R.~J. Lipton.
\newblock A probabilistic remark on algebraic program testing.
\newblock {\em Inform. Process. Lett.}, 7(4):193--195, 1978.

\bibitem{George1973}
A.~George.
\newblock Nested dissection of a regular finite element mesh.
\newblock {\em SIAM J. Numer. Anal.}, 10:345--363, 1973.

\bibitem{GilbertT1987}
J.~R. Gilbert and R.~E. Tarjan.
\newblock The analysis of a nested dissection algorithm.
\newblock {\em Numer. Math.}, 50(4):377--404, 1987.

\bibitem{GrotschelP81}
M.~Gr{\"{o}}tschel and W.~R. Pulleyblank.
\newblock Weakly bipartite graphs and the {M}ax-cut problem.
\newblock {\em Oper. Res. Lett.}, 1(1):23--27, 1981.

\bibitem{GutinSY1998}
G.~Z. Gutin, B.~Sudakov, and A.~Yeo.
\newblock Note on alternating directed cycles.
\newblock {\em Discrete Math.}, 191(1-3):101--107, 1998.

\bibitem{HemaspaandraST04}
E.~Hemaspaandra, H.~Spakowski, and M.~Thakur.
\newblock Complexity of cycle length modularity problems in graphs.
\newblock In M.~Farach{-}Colton, editor, {\em {LATIN} 2004: Theoretical
  Informatics, 6th Latin American Symposium, Buenos Aires, Argentina, April
  5-8, 2004, Proceedings}, volume 2976 of {\em Lecture Notes in Computer
  Science}, pages 509--518. Springer, 2004.

\bibitem{ItaiR78}
A.~Itai and M.~Rodeh.
\newblock Finding a minimum circuit in a graph.
\newblock {\em {SIAM} J. Comput.}, 7(4):413--423, 1978.

\bibitem{Kaltofen1992}
E.~Kaltofen.
\newblock On computing determinants of matrices without divisions.
\newblock In P.~S. Wang, editor, {\em Proceedings of the 1992 International
  Symposium on Symbolic and Algebraic Computation, {ISSAC} '92, Berkeley, CA,
  USA, July 27-29, 1992}, pages 342--349. {ACM}, 1992.

\bibitem{KoutisW2016}
I.~Koutis and R.~Williams.
\newblock Algebraic fingerprints for faster algorithms.
\newblock {\em Comm. {ACM}}, 59(1):98--105, 2016.

\bibitem{LabahnNZ2017}
G.~Labahn, V.~Neiger, and W.~Zhou.
\newblock Fast, deterministic computation of the {H}ermite normal form and
  determinant of a polynomial matrix.
\newblock {\em J. Complexity}, 42:44--71, 2017.

\bibitem{LidlN1997}
R.~Lidl and H.~Niederreiter.
\newblock {\em Finite Fields}, volume~20 of {\em Encyclopedia of Mathematics
  and its Applications}.
\newblock Cambridge University Press, Cambridge, second edition, 1997.

\bibitem{LiptonT1979}
R.~J. Lipton and R.~E. Tarjan.
\newblock A separator theorem for planar graphs.
\newblock {\em SIAM J. Appl. Math.}, 36(2):177--189, 1979.

\bibitem{Little1975}
C.~Little.
\newblock A characterization of convertible (0, 1)-matrices.
\newblock {\em J. Combin. Theory Ser. B}, 18(3):187--208, 1975.

\bibitem{Lubiw88}
A.~Lubiw.
\newblock A note on odd/even cycles.
\newblock {\em Discrete Appl. Math.}, 22(1):87--92, 1988.

\bibitem{McCuaig00}
W.~McCuaig.
\newblock Even dicycles.
\newblock {\em J. Graph Theory}, 35(1):46--68, 2000.

\bibitem{McCuaig04}
W.~McCuaig.
\newblock P{\'{o}}lya's permanent problem.
\newblock {\em Electron. J. Combin.}, 11(1), 2004.

\bibitem{McCuaigRST97}
W.~McCuaig, N.~Robertson, P.~D. Seymour, and R.~Thomas.
\newblock Permanents, {P}faffian orientations, and even directed circuits
  (extended abstract).
\newblock In F.~T. Leighton and P.~W. Shor, editors, {\em Proceedings of the
  Twenty-Ninth Annual {ACM} Symposium on the Theory of Computing, El Paso,
  Texas, USA, May 4-6, 1997}, pages 402--405. {ACM}, 1997.

\bibitem{Monien83}
B.~Monien.
\newblock The complexity of determining a shortest cycle of even length.
\newblock {\em Computing}, 31(4):355--369, 1983.

\bibitem{RST99}
N.~Robertson, P.~D. Seymour, and R.~Thomas.
\newblock Permanents, {P}faffian orientations, and even directed circuits.
\newblock {\em Ann. of Math. (2)}, 150:929--975, 1999.

\bibitem{Schwartz1980}
J.~T. Schwartz.
\newblock Fast probabilistic algorithms for verification of polynomial
  identities.
\newblock {\em J. {ACM}}, 27(4):701--717, 1980.

\bibitem{SeymourT1987}
P.~D. Seymour and C.~Thomassen.
\newblock Characterization of even directed graphs.
\newblock {\em J. Combin. Theory Ser. B}, 42(1):36--45, 1987.

\bibitem{Thomassen85}
C.~Thomassen.
\newblock Even cycles in directed graphs.
\newblock {\em European J. Combin.}, 6(1):85--89, 1985.

\bibitem{Thomassen88}
C.~Thomassen.
\newblock On the presence of disjoint subgraphs of a specified type.
\newblock {\em J. Graph Theory}, 12(1):101--111, 1988.

\bibitem{Thomassen92}
C.~Thomassen.
\newblock The even cycle problem for directed graphs.
\newblock {\em J. Amer. Math. Soc.}, 5(2):217--229, 1992.

\bibitem{Thomassen93}
C.~Thomassen.
\newblock The even cycle problem for planar digraphs.
\newblock {\em J. Algorithms}, 15(1):61--75, 1993.

\bibitem{Valiant1979}
L.~Valiant.
\newblock The complexity of computing the permanent.
\newblock {\em Theoret. Comput. Sci.}, 8(2):189--201, 1979.

\bibitem{VaziraniY1989}
V.~V. Vazirani and M.~Yannakakis.
\newblock Pfaffian orientations, {$0$}-{$1$} permanents, and even cycles in
  directed graphs.
\newblock {\em Discrete Appl. Math.}, 25(1-2):179--190, 1989.

\bibitem{vonzurGathenG2013}
J.~von~zur Gathen and J.~Gerhard.
\newblock {\em Modern Computer Algebra}.
\newblock Cambridge University Press, third edition, 2013.

\bibitem{Younger73}
D.~H. Younger.
\newblock Graphs with interlinked directed circuits.
\newblock In {\em Proc. Midwest Symp. on Circuit Theory}, volume~2, pages
  XVI2.1--XVI2.7, 1973.

\bibitem{Yuster2008}
R.~Yuster.
\newblock Matrix sparsification for rank and determinant computations via
  nested dissection.
\newblock In {\em 49th Annual {IEEE} Symposium on Foundations of Computer
  Science, {FOCS} 2008, October 25-28, 2008, Philadelphia, PA, {USA}}, pages
  137--145. {IEEE} Computer Society, 2008.

\bibitem{YusterZ97}
R.~Yuster and U.~Zwick.
\newblock Finding even cycles even faster.
\newblock {\em {SIAM} J. Discrete Math.}, 10(2):209--222, 1997.

\bibitem{Zippel1979}
R.~Zippel.
\newblock Probabilistic algorithms for sparse polynomials.
\newblock In E.~W. Ng, editor, {\em Symbolic and Algebraic Computation,
  {EUROSAM} '79, An International Symposium on Symbolic and Algebraic
  Computation, Marseille, France, June 1979, Proceedings}, volume~72 of {\em
  Lecture Notes in Computer Science}, pages 216--226. Springer, 1979.

\end{thebibliography}






\end{document}